\newtheorem{prop}{Proposition}
\newenvironment{proof}{\paragraph*{Proof:}}{\hfill$\square$\\}
\newcommand\rar{\rightarrow}
\newcommand\vor{\zeta}
\begin{document}

\preprint{APS/123-QED}

\title{Study of dynamics in post-transient flows\\using Koopman mode decomposition}

\author{Hassan Arbabi}
 \email{harbabi@engineering.ucsb.edu}
\author{Igor Mezi\'{c}}%
 \email{mezic@engineering.ucsb.edu}
\affiliation{%
 University of California, Santa Barbara\\
 Santa Barbara, CA, 93106, USA
}%

\date{\today}

\begin{abstract}
The Koopman Mode Decomposition (KMD) is a data-analysis technique which is often used to extract the spatio-temporal patterns of complex flows. In this paper, we use KMD to study the dynamics of the lid-driven flow in a two-dimensional square cavity based on theorems related to the spectral theory of the Koopman operator. We adapt two algorithms, from the classical Fourier and power spectral analysis, to compute the discrete and continuous spectrum of the Koopman operator for the post-transient flows.
Properties of the Koopman operator spectrum are linked to the sequence of flow regimes occurring between $Re=10000$ and $Re=30000$, and changing the flow nature from steady to aperiodic. The Koopman eigenfunctions for different flow regimes, including flows with mixed spectra, are constructed using the assumption of ergodicity in the state space.
The associated Koopman modes show remarkable robustness even as the temporal nature of the flow is changing substantially.
We observe that KMD outperforms the proper orthogonal decomposition in reconstruction of the flows with strong quasi-periodic components.
\end{abstract}

\pacs{Valid PACS appear here}
\maketitle

\section{Introduction}

In 1931, Bernard Koopman offered a new formulation of Hamiltonian mechanics based on the theory of Hilbert spaces \cite{koopman1931}. In his formulation, the central object was a linear transformation, called the Koopman operator, which described the time evolution of observations on a Hamiltonian system.
The potential applications of Koopman's work to general theory of dynamical systems went mostly unrecognized for a long time, until it was brought to attention, and further developed in the context of spectral analysis and dissipative systems \cite{mezic2004comparison,mezic2005}.
In particular, Ref. \onlinecite{mezic2005} presented a linear expansion to describe the temporal evolution of observables on a nonlinear dynamical system in terms of the eigenfunctions and eigenvalues of the Koopman operator. This expansion, known as the Koopman Mode Decomposition (KMD), introduced the concept of Koopman modes, which are projection of observable evolution onto eigenfunctions of the Koopman operator, and give spatial ``shapes" that evolve exponentially in time - with exponents which could be complex numbers. Since then, KMD has been used as a tool of data-driven analysis with applications ranging from power network stability analysis to pattern detection in neural networks  \cite{susuki2014,giannakis2015spatiotemporal,georgescu2015building,erichson2015compressed,brunton2016extracting,mann2016dynamic}

In high-dimensional systems like fluid flows, direct analysis of the state space is computationally prohibitive and one often relies on the data-driven methods to extract the underlying spatio-temporal features from data obtained by numerical simulation or experiment.
For this purpose, the KMD was introduced to fluid mechanics by \citet{rowley2009spectral}, with an application to the problem of jet in cross flow. Rowley \textit{et al.} showed how KMD can extract various oscillation frequencies and their associated spatial structures from the data. They also discovered the connection between the KMD and the numerical algorithm called Dynamic Mode Decomposition (DMD), which had been proposed by P.J. Schmid \cite{schmid2008}.
Since then, KMD and its numerical counterpart DMD, have become popular tools for extraction of physically-relevant time scales and their associated spatial structures in complex flows \citep{schmid2010,schmid2011applications,pan2011dynamical,seena2011dynamic,muld2012flow,hua2016dynamic}.
Furthermore, the dynamical-systems origin of KMD has motivated a number of its applications in study of dynamical phenomena such as stability, bifurcation and  transition in flows. Some examples are study of bifurcation and transition in flow past a cylinder \citep{chen2012variants,bagheri2013koopman}, detection of significant structures in boundary layer transition \citep{sayadi2014reduced,subbareddy2014direct}, bifurcation analysis using parametric DMD \citep{sayadi2015parametrized} and identification of flow regimes in thermo-fluid systems using sparse sensing \citep{kramer2015sparse}.

In this work, we discuss KMD as a versatile tool to identify and analyze the state space dynamics of post-transient flows (i.e. state space trajectory evolving on the attractor).
The traditional approach to determine the dynamic regime of fluid systems is to look at the Fourier or power spectrum  of time series . The existence of sharp peaks in those spectra is deemed to indicate periodic motion while broadband spectrum is often interpreted as a sign of chaos \cite[e.g.][]{swinney1978transition,farmer1980power,karniadakis1992three,tomboulides2000numerical,peng2003transition,basley2011experimental}.
We point out the relationship between this classical viewpoint and the Koopman spectral analysis, and describe how the Koopman spectrum of data can be used to determine the geometry of the attractor, using the results in \cite{mezic2017koopman}.
The Koopman viewpoint generalizes the classical spectral analysis through the notion of Koopman eigenfunctions and modes. In particular, the Koopman eigenfunctions provide linearly evolving coordinates in the state space, and enable the use of state-space analysis and control techniques which are not reachable by classical spectral analysis. For example, the knowledge of Koopman eigenfunctions can be used to
construct nonparametric predictors \cite{giannakis2017data,korda2016linear}, state estimators \cite{surana2016linear,surana2016koopman} and  nonlinear controllers \cite{korda2016linear,kaiser2017data} using linear system strategies. The utility of Koopman eigenfunctions in flow prediction and control is further discussed in \cite{taira2017modal}. In this paper, we present a new way to construct (and visualize) the Koopman eigenfunctions in post-transient flows which are ergodic in the state space.

We apply the Koopman analysis to the two dimensional lid-driven cavity flow with regularized lid velocity. This flow provides a good benchmark for our analysis, since it shows a wide range of dynamic behavior over various Reynolds numbers. The dynamics of each flow regime is discussed in terms of the Koopman spectral properties: the Koopman spectrum determines the type of the attractor, the Koopman eigenfunctions indicate the oscillatory directions of motion in the state space, and the Koopman modes describe the evolution of velocity field in the flow domain. In particular, we use the Koopman modes to study the traveling waves that appear as a result of bifurcation from steady solution to periodic and quasi-periodic flow.

The lid-driven cavity flow becomes fully chaotic at ultimately high Reynolds numbers. In such flows, the Koopman spectrum is continuous and does not contain any (non-trivial) eigenvalues.
Using the properties of the Koopman operator and plausible assumptions on the post-transient dynamics, one can show how the measurements of observable on the chaotic flow can be interpreted as a realization of a \emph{wide-sense stationary stochastic process}.
This observation allows us to use the techniques from random signal processing to compute the continuous spectrum of the Koopman operator. We also study the flow regimes with mixed spectra, i.e., flows that have both discrete and continuous spectrum. In those flows, the evolution of flow observables is a mixture of quasi-periodic and chaotic motion, and the Koopman eigenfunctions help us distinguish and extract the quasi-periodic components of motion in the state space.

As of now, DMD-type algorithms are the most popular methods for computation of Koopman modes and eigenvalues. A notable extension of the original algorithms in \cite{schmid2010,rowley2009spectral}, is the so-called Extended DMD \cite{williams2015data} which approximates the Koopman operator as a matrix using different trajectories and a dictionary of observables. Other early works have discussed the connection of DMD with other data analysis tools such as Discrete Fourier Transform (DFT) \cite{chen2012variants} and linear system identification methods \cite{tu2014dynamic}. More recently, other variants of DMD have been proposed to study problems that exhibit a large range of time scales \cite{kutz2016multiresolution} and systems with external input \cite{proctor2016dynamic}.
Given the increasingly large and complex data sets that are generated by simulations and experiments, DMD has also been extended to handle larger data sets \citep{hemati2014dynamic,gueniat2015dynamic}, different sampling techniques \citep{brunton2013compressive,tu2014dynamic,gueniat2015dynamic} and noise \citep{dawson2016characterizing,hemati2017biasing}.

In this paper, we use a different approach for computation of Koopman spectral properties.
For post-transient flows, the spectrum of the Koopman operator (including both continuous and discrete components) lies on the imaginary axis, and the problem of estimating the Koopman spectrum reduces to the classical spectral estimation of signals. This problem is challenging for flows with mixed spectra where there is no a priori model for the continuous spectrum. Our methodology for Koopman spectral estimation consists of three steps: first, we apply a high-resolution algorithm - adapted from Laskar \cite{laskar1990chaotic,laskar1992measure}- to detect the candidate discrete Koopman frequencies and modes.
The Laskar algorithm provides a controllable balance between accuracy and computational efficiency which makes it suitable for large data sets like high-resolution flow snapshots. Moreover, it makes direct use of the harmonic averaging \cite{mezic2004comparison,mezic2005} which has proven convergence properties for computation of Koopman modes. In the second step, we use the ergodic properties of the attractor to discard the spurious frequencies that are artifacts of the continuous spectra. After extracting the periodic components of the flow, we estimate the continuous Koopman spectrum by applying the Welch method \cite{welch1967use} to the chaotic residual. Our computational approach is advantageous over DMD-type algorithms since it can handle the continuous spectrum, and it is related to the well-studied techniques and notions in spectral analysis of signals.

A key objective of modal decomposition techniques is to obtain low-dimensional representation of the data from experiments or numerical simulations. Therefore, an important question regarding any decomposition is how efficiently it can capture the flow evolution.
Several authors have already proposed variations of DMD algorithm to obtain low-dimensional description of the the flow features in an optimal manner
\cite{chen2012variants,wynn2013optimal,jovanovic2014sparsity}.
DMD is also used in a data assimilation approach to obtain a low-dimensional dynamic model of the cylinder wake flow \cite{tissot2014model}. Here, we study the efficiency of the Koopman modes by considering the error in the low-dimensional truncations of KMD in representing the cavity flow dynamics. We also compare the performance of Koopman modes with the modes obtained by Proper Orthogonal Decomposition (POD).

The outline of this paper is as follows. In \S\ref{sec:KoopmanTheory}, we review the basics of the Koopman operator theory and describe how the Koopman spectrum is related to the geometry of the attractor. We also point out the connection between the Koopman mode decomposition of different observables such as stream function, velocity field and vorticity.
In \S\ref{sec:CFD}, the flow settings for the lid-driven cavity and its numerical solution are described. Section \ref{sec:numericalKoopman} discusses the problem of estimating the Koopman spectrum and modes.
In \S\ref{sec:Koopmanresults}, we present the results of Koopman spectral analysis for the cavity flow. The Koopman modes and eigenfunctions are discussed in \S\ref{sec:Kefun} and \S\ref{sec:modes}, and the comparison with POD is presented in \S\ref{sec:ProjectedModels}. We summarize the results and conclude in \S\ref{sec:discussion}.

\section{Koopman Operator Theory \label{sec:KoopmanTheory}}
The Koopman operator theory is a mathematical formalism that relates the observations on a system to its underlying state-space dynamics. For a viscous incompressible flow, the state space is infinite-dimensional, i.e., it consists of all divergence-free smooth vector-fields defined on the flow domain. Studying the trajectories in the state space of such system via classical tools, like Poincar\'e maps, is nonviable, since they involve computation or visualization in an appropriate truncation of an infinite-dimensional space.
The Koopman operator viewpoint circumvents this problem by focusing on the time evolution of observables rather than state variables.
For example, the pressure magnitude at a certain point in the flow domain or the total kinetic energy of the fluid are are two observables on a flow that can be analyzed through the Koopman operator framework. Observables could be multiple-valued as well, like a vector containing the values of velocity at multiple grid points, or even a field of observables such as the vorticity field. One prominent outcome of the Koopman operator theory is the Koopman mode decomposition \citep{mezic2005}, which describes the evolution of such observables as a linear combination of Koopman modes, Koopman eigenfunctions and Koopman frequencies, which are all explained below.
We briefly review the basic formalism of the theory but the interested reader is referred to the review article \cite{mezic2013analysis} and the references therein for a more detailed exposition.

Consider the state space of a flow including all the smooth divergence-free velocity fields defined on the flow domain. The state of the flow, realized by the velocity field $\mathbf u$, evolves in time according to the Navier-Stokes equation written as
\begin{equation}
\partial_t \mathbf u(\mathbf x,t)=\mathbf F(\mathbf u(\mathbf x,t)).
\end{equation}
We let $g$ be a complex-valued function on the state space of the flow, i.e., for every state $\mathbf{u}$, the observable returns the complex value $g(\mathbf{u})$.
We call $g$ an \emph{observable} on the flow.
The Koopman operator describes how this observable changes with time.
More precisely, if we assume that the solution to above equation exists and it is unique, then the Koopman operator at time $\tau \in[0,\infty)$, denoted by $U^\tau$, maps the function $g$ to a new function $g^\tau$ such that
\begin{equation}\label{eq:KoopmanDef}
g^\tau(\mathbf{u}(\mathbf{x},t)):=U^\tau g(\mathbf{u}(\mathbf{x},t)) = g(\mathbf{u}(\mathbf{x},t+\tau)),
\end{equation}
The Koopman operator is a linear operator by definition ($U(\alpha_1 g + \alpha_2 h)= \alpha_1 U^\tau g +\alpha_2 U^\tau h$ for scalars $\alpha_{1,2}$) and therefore analyzing its spectrum and eigenfunctions gives a comprehensive understanding of its action on observables.
An eigenfunction of the Koopman operator is a function on the state space of the flow, similar to $g$, which evolves linearly with time.
Let us denote by $\phi_j$ the Koopman eigenfunction associated with the Koopman eigenvalue $\lambda_j$. Then
\begin{equation}
\phi^\tau(\mathbf{u}) := U^\tau \phi_j(\mathbf{u})= e^{\lambda_j \tau} \phi_j(\mathbf{u}) \label{eq:efun}
\end{equation}
In this work, we consider the post-transient flow dynamics, for which, the Koopman eigenvalues are known to lie on the imaginary axis \citep{mezic2005}. Therefore, we will be interested in Koopman frequencies, $\omega_j$, related to Koopman eigenvalues through the following,
\begin{equation}
\lambda_j=i\omega_j,\quad \omega_j\in \mathbb{R}.
\end{equation}
Let us temporarily assume that all the observables lie in the linear span of the Koopman eigenfunctions. Then any observable like $g$ can be expanded in the Koopman eigenfunctions,
\begin{equation}\label{eq:linexpansion}
g(\mathbf{u})= \sum_{j=1}^\infty g_j\phi_j(\mathbf{u}),
\end{equation}
where the scalar coefficient $g_j$ is given by the projection of observable $g$ onto the Koopman eigenfunction $\phi_j$. Since the Koopman operator is linear, we can use Eq. \eqref{eq:efun} and find a new expression for evolution of observable $g$ in terms of the Koopman eigenfunctions,
\begin{equation}
g^\tau(\mathbf{u})=U^\tau g(\mathbf{u})= \sum_{j=1}^\infty g_j\phi_j(\mathbf{u})e^{i\omega_j\tau}.
\end{equation}
If we replace the single-valued observable $g$ with a vector-valued observable such as $\mathbf{g}$, and follow the above procedure, the coefficient $g_j$ turns into the vector of coefficients $\mathbf{g}_j$, and we obtain a similar expansion in the vector form,
\begin{equation}
\mathbf{g}^\tau(\mathbf{u}):=U^\tau \mathbf{g}(\mathbf{u})= \sum_{j=1}^\infty \mathbf{g}_j\phi_j(\mathbf{u})e^{i\omega_j \tau}, \label{eq:KMD}
\end{equation}
This expansion of observables in terms of Koopman eigenfunctions is the so-called Koopman Mode Decomposition (KMD). The vector $\mathbf{g}_j$, called the Koopman mode associated with the Koopman frequency $\omega_j$, describes the components of the observable $\mathbf{g}$ obtained by projection of the observable onto the Koopman eigenfunction $\phi_j$.  As a result, the evolution of $\mathbf{g}$  in time could be described as a linear combination of Koopman modes with oscillating coefficients.
We will further explain the nature of this expansion in \S\ref{subsec:bifurcationKMD} and  Koopman modes in \S\ref{sec:Koopmanresults}.

The above expansion can be applied to \emph{fields of observables} as well, in which case, the Koopman modes become fields of coefficients. For example, projecting the \emph{velocity field observable} onto a Koopman eigenfunction returns a field of coefficients which can be thought of as a steady velocity field. Note that the velocity field undergoes nonlinear time evolution described by Navier-Stokes equations, but at the same time, the Koopman mode decomposition of the velocity field as an observable, offers a linear expansion in Koopman modes. This seeming paradox between the nonlinear evolution and linear expansion of KMD is resolved once we recall that the expansion in \eqref{eq:KMD} is essentially infinite dimensional and therefore it can describe the nonlinear time evolution.
In the following sections, we explain how the expansion above is related to the asymptotic dynamics of the trajectories in the state space and also remark on the choice of observable for study of the cavity flow.

\subsection{Flow bifurcations and Koopman mode decomposition}\label{subsec:bifurcationKMD}
It is interesting to see how the expansion in Eq. \eqref{eq:KMD} changes as the flow undergoes bifurcation.
We will be interested in detecting the post-transient flow dynamics which is directly related to the type of attractor on which the flow trajectory is evolving.
The bifurcations affect the Koopman eigenvalues, eigenfunctions and modes in the KMD, but here, we only discuss how the change in the distribution of Koopman eigenvalues can be traced back to qualitative changes in the attractor of the flow.

First, we recall some standard notions from dynamical systems theory. We call a compact invariant subset of the state space, denoted by $A$, an attractor of the dynamical system, if for many initial conditions the systems evolves toward $A$. Moreover $A$ is a minimal set in the sense that it cannot be split into smaller attractors (see e.g. \cite{wiggins2003introduction}). Simple examples of attractors in the state space of flows include stable fixed points and periodic orbits which correspond to steady and time-periodic flows, respectively. The attractor could be more complicated and exhibit chaos such as the butterfly-shaped attractor of the chaotic Lorenz system. We assume that the dynamics on the attractor preserves a physical measure (i.e. a distribution), which we denote by $\mu$. Roughly speaking, this implies that the time-averages (and therefore statistical properties) of continuous observables on the flow are well-defined. Now we let $\mathcal{H}:=L^2(A,\mu)$ be the Hilbert space of square-integrable observables defined on the attractor ($A$) with respect to measure $\mu$. In this work, we are interested in observables that belong to $\mathcal{H}$ (which includes continuous observables as well). It turns out that Koopman operator defined in \eqref{eq:KoopmanDef} is a unitary operator in $\mathcal{H}$ (i.e. its adjoint and inverse are the same), which implies that its spectrum lies on the unit circle. In the following, we use the symbol $\overset{\mu}{=}$ to describe the functional equalities, i.e., the functions on different sides of $\overset{\mu}{=}$ are equal everywhere on $A$ except on a set with zero $\mu$-measure. We also use $<f,g>_\mathcal{H}$ to denote the inner products in $\mathcal{H}$, i.e.,
\begin{equation}\label{eq:innerproduct}
  <f,g>_\mathcal{H}=\int_A fg^*d\mu.
\end{equation}

Let us revisit the key assumption that led to derivation of  \eqref{eq:KMD}, that is, the Koopman eigenfunctions span the space of observables, in this case, $\mathcal{H}$. For simple attractors like limit cycles and torus, this assumption holds and the expansion in \eqref{eq:KMD} can be used to explain the behavior of all observables. We first describe this case in more detail and then turn to the more general form of Koopman expansion for more complex dynamics.

When the trajectory in the state space of the flow evolves on a limit cycle or a torus, the post-transient flow shows (quasi-)periodic time dependence.
Let $\boldsymbol{\Omega}=[\omega_1,\omega_2,\ldots,\omega_m]^T$ denote the vector of basic frequencies for the for the motion of state variable on an $m-$dimensional torus (for limit cycles $m=1$). The Koopman spectral expansion for the flow is given by (\cite{mezic2017koopman})
\begin{equation}
U^\tau \mathbf{g}\overset{\mu}{=} \sum_{\mathbf{k}\in\mathbb{Z}^m} \mathbf{g}_{\mathbf{k}}\phi_\mathbf{k}e^{i\mathbf{k}\cdot\boldsymbol{\Omega} \tau}. \label{eq:KMDqperiodic}
\end{equation}
We have dropped the dependence of $\mathbf{g}$ and $\phi_k$ on the state $\mathbf{u}$ to simplify the notation. The above equation is a functional equality which holds almost everywhere on the attractor. We can evaluate it for a single trajectory starting  from the initial state $\mathbf{u}_0$ to obtain the vector expansion
\begin{equation}
U^\tau \mathbf{g}(\mathbf{u}_0)= \sum_{\mathbf{k}\in\mathbb{Z}^m} \mathbf{g}_{\mathbf{k}}\phi_\mathbf{k}(\mathbf{u}_0)e^{i\mathbf{k}\cdot\boldsymbol{\Omega} \tau}. \label{eq:KMDqperiodic_u0}
\end{equation}
The term $U^\tau \mathbf{g}(\mathbf{u}_0)$ is the signal generated by observing $\mathbf g$ over the trajectory starting at $\mathbf{u}_0$.
If the attractor is a limit cycle, this signal is time-periodic and \eqref{eq:KMDqperiodic_u0} is simply the Fourier series expansion in time. If the attractor is a torus, this expansion is a generalized Fourier expansion for the quasi-periodic signal that is generated by measuring $\mathbf{g}$. We observe that the Koopman frequencies in the above expansion form a \emph{lattice} on the frequency axis. For limit cycling systems, this lattice consists of multiples of the basic frequency $\omega_1$, while for the torus attractors, it is given linear combinations of the basic frequencies in $\boldsymbol{\Omega}$ with integer coefficients. Hence, a bifurcation from a limit cycle to a torus can be easily detected by counting the number of basic frequencies in the lattice of Koopman frequencies obtained from the data.

The point-evaluated expansion in  \eqref{eq:KMDqperiodic_u0} is more suitable for the study of fluid flows than the function expansion in  \eqref{eq:KMDqperiodic}. This is due to the fact that each flow simulation or experiment provides us with only a single trajectory in the state space and direct evaluation of the Koopman eigenfunctions on arbitrary regions of state space is not practical.
In case of post-transient flows, however, the ergodicity condition - which is discussed later - allows us to construct and visualize the Koopman eigenfunctions on the attractor using the signals coming from as few as one trajectory. We will use this fact to construct and visualize the eigenfunctions in \S\ref{sec:Koopmanresults}.

The converse of the above statements is also true, that is, if the Koopman spectrum of observables has only a countable number of frequencies, then the flow trajectory must be evolving on a torus-shaped attractor in the state space.  In fact, the  \emph{representation theorem} from the ergodic theory states that if the post-transient flow dynamics is ergodic and smooth, the Koopman operator having only discrete spectrum implies that the motion in the state space is topologically equivalent to rotation on a torus \cite{neumann1932operatorenmethode}. This classic result combined with numerical KMD algorithm gives a practical framework for detecting motion on tori in high-dimensional systems.

For post-transient flows with chaotic behavior, the Koopman eigenfunctions do not span $\mathcal{H}$ and evolution of observables cannot be described based on them. In fact, the Koopman operator spectra, in addition to eigenvalues, includes continuous spectrum which is related to the chaotic component of the flow. The spectral expansion for the Koopman operator takes a more general form (see e.g. \cite{maccluer2008elementary}), however,
as first stated in \citep{mezic2005}, we can still represent it in a way that distinguishes the quasi-periodic and chaotic components of the evolution. For the scalar observable $g$, it can be written as
\begin{equation}
U^\tau g\overset{\mu}{=} \sum_{\mathbf{k}\in\mathbb{Z}^m} g_{\mathbf{k}}\phi_\mathbf{k}e^{i\mathbf{k}\cdot\boldsymbol{\Omega} \tau} + \int_{-\infty}^{\infty}e^{ i \alpha \tau} dE_\alpha (g). \label{eq:KMDgeneral}
\end{equation}
The first term on the right-hand-side is the contribution of discrete spectrum and describes the quasi-periodic part of the flow (similar to \eqref{eq:KMDqperiodic}).
The second term is the contribution of the continuous spectrum. Informally speaking,  $i\alpha$ with $\alpha\in(-\infty,\infty)$, denotes a continuum  of eigenvalues  distributed along the imaginary axis.
The term $dE_\alpha(\cdot)$ is the spectral measure of the Koopman operator, that is, for each interval of frequencies such as $I=[\alpha_1,\alpha_2]$, $\int_{\alpha\in I}dE_{\alpha}(g)$ is the projection of the observable $g$ onto the eigen-subspace of $\mathcal{H}$ associated with $I$. The above expansion in the functional form is not suitable for flow applications, and it can be converted to a scalar equality by taking its inner product with the same observable $g$. That is
\begin{equation}
<g,U^\tau g>_\mathcal{H} = \sum_{k=1}^\infty |g_k|^2e^{i\omega_k \tau} + \int_{-\infty}^{\infty}e^{ i \alpha \tau} \rho_g(\alpha) d\alpha. \label{eq:KMDgeneral2}
\end{equation}
where we have assumed that eigenfunctions are normalized, i.e., $\|\phi_k\|_\mathcal{H}=1$. In passing from \eqref{eq:KMDgeneral} to \eqref{eq:KMDgeneral2}, we have made a technical assumption that the spectral measure of the Koopman operator for the chaotic part is absolutely continuous. The Koopman spectral density $\rho_g$ denotes the contribution of the continuous spectrum, such that the contribution of the frequency interval $I$ to the evolution of $g$ is given by
\begin{equation}
\mu_g(I)= \int_{I} \rho_g(\alpha) d\alpha.
\end{equation}

In order to compute the spectrum of the Koopman operator from the flow data, we need to assume that \emph{the  post-transient dynamics is ergodic}.
This implies that the statistics of the flow is independent of the initial condition, and the trajectories starting almost everywhere provide a perfect sampling of observables (in the sense defined in \eqref{eq:autocor} below). The ergodicity assumption holds for post-transient evolution of typical dynamical systems, including systems with periodic and quasi-periodic attractors and many chaotic systems like Lorenz \cite{luzzatto2005lorenz}.
Under this condition, we can use the pointwise ergodic theorem \cite{petersen1989ergodic} to approximate the inner product in  \eqref{eq:KMDgeneral2} from the data,
\begin{equation}
<g,U^\tau g>_\mathcal{H} = r_g(\tau):=\lim_{T\rar\infty} \frac{1}{T}\int_{0}^{T}g(t)g^*(t+\tau)dt. \label{eq:autocor}
\end{equation}
where $r_g(\tau)$ is the autocovariance function of $g$ at time $\tau$.
Therefore, we can approximate the spectral density of the Koopman operator by first extracting the chaotic component of $g$, then approximating $r_g$ using finite-time observations (i.e. finite $T$ in \eqref{eq:autocor}), and finally applying inverse Fourier transform to $r_g$.
We will discuss the practical aspects of this computation  in \S\ref{sec:numericalKoopman}.


\subsection{Stochastic processes and Koopman representation of deterministic chaos }\label{sec:stochasticprocess}
 In analyzing the chaotic data from experiments and simulations, it is customary to use the tools from applied probability theory  even in the case that underlying dynamical systems is fully deterministic. The reasoning behind this approach is the duality between the post-transient evolution of dynamical systems which is measure-preserving and the stationary stochastic processes. A classic formalism of this duality can be found e.g. in \cite{doob1953stochastic}. In this section, we reiterate this connection in the framework of the Koopman operator theory with an emphasis on the spectral expansion of observables.

 Recall that a continuous-time stochastic process is a collection of real random variables that are indexed by time, and denoted by
\begin{equation}\label{eq:stochastic_process}
  \{X_t\}_{t\in\mathbb{R}}.
\end{equation}
where $X_t$ is the random variable at time $t$ with a specified distribution over real line. A stochastic process is \emph{wide-sense stationary} if it satisfies two conditions. First, its expected value should not change with time, i.e.,
\begin{eqnarray}
  \mathbb{E}(X_t)=\mathbb{E}(X_{t+\tau})=m,\quad \text{for all } \tau \in \mathbb{R},
\end{eqnarray}
and second, its autocovariance function only depends on the lag time, i.e.,
 \begin{eqnarray}
  cov(X_t,X_{t+\tau})= \mathbb{E}((X_t-m)(X_{t+\tau}-m))=cov(\tau).
\end{eqnarray}
Now we consider the deterministic flow evolving on the attractor $A$ which preserves the normalized measure $\mu$ ($\mu(A)=1$). We see that the collection of observables
\begin{equation}\label{eq:gprocess}
  \{U^tg\}_{t\in\mathbb{R}},
\end{equation}
is a stochastic process defined on the probability space $(A,\mu)$. Each observable $U^tg$ is a random variable and it assigns a probability distribution on the real line which is given by
\begin{equation}\label{eq:distribution}
  \mathbf{P}(B)= \mu \big( (U^tg)^{-1}(B)\big),\quad B\subset\mathbb{R}.
\end{equation}
where $\mathbf{P}(B)$ is the probability of the interval $B$. Because of the measure-preserving property of the dynamics, this probability is independent of $t$, and the stochastic process in \eqref{eq:gprocess} is identically distributed (but not independent). Moreover, it is a wide-sense stationary process; in view of \eqref{eq:innerproduct}, we can write
\begin{equation}
  \mathbb{E}(U^tg)= <U^tg,1>_\mathcal{H}=<g,U^{-t} 1>_\mathcal{H}=<g,1>_\mathcal{H}=\mathbb{E}(g),
\end{equation}
and
\begin{eqnarray*}
  cov(U^tg,U^{t+\tau}g) &=& <U^tg,U^{t+\tau}g>_\mathcal{H},\\
  &=&<U^{-t}U^tg,U^\tau{g}>_\mathcal{H},\\
  &=&<g,U^\tau g>_\mathcal{H}=cov(\tau).
\end{eqnarray*}
where we have used the unitary property of the Koopman operator, i.e., $(U^t)^*=U^{-t}$. Using the measure-preserving property, one can show \eqref{eq:gprocess} is strictly stationary as well \cite{doob1953stochastic}, but that is not required for the spectral expansion.

According to the Wiener-Khintchine theorem (e.g. \cite{peebles1980probability}), the covariance of any wide-sense stationary process, such as \eqref{eq:gprocess}, has a spectral expansion in the following form,
\begin{eqnarray*}
  cov(g,U^{\tau}g) &=&\int_{-\infty}^{\infty}e^{i\alpha\tau}dF(\alpha)
\end{eqnarray*}
where $F$ is the power spectral distribution of the process. Note that this expansion holds for the general post-transient dynamics including both chaotic and quasi-periodic behavior. In case that there are no quasi-periodic components in the flow, and $F$ is absolutely continuous similar to \eqref{eq:KMDgeneral2}, we can rewrite the above expansion as
 \begin{eqnarray*}
  cov(g,U^{\tau}g) &=&\int_{-\infty}^{\infty}e^{i\alpha\tau}\rho(\alpha)d\alpha
\end{eqnarray*}
where $\rho$ is called the Power Spectral Density (PSD) of the stochastic process. Despite the deterministic nature of  our system, we observe that we can treat the chaotic component of the data as a a realization of a stationary process, and consequently, the notion of the Koopman spectral density coincides with that of PSD for random signals.
This observation enables us to use the spectral estimation techniques of stochastic signals for computation of Koopman continuous spectrum.

\subsection{Choice of observables and the relationship between their Koopman modes}\label{sec:KoopmanRemarks}
In this section, we consider the choice of observables for KMD and the relation between their modal decomposition.
This question is important since applying KMD to an observable reveals only the Koopman eigenvalues that are present in the expansion of that observable. Furthermore, one can use the relationship between the Koopman modes of different observables to reduce the computational cost of the analysis. The propositions in Appendix \ref{app:KMD} assert that if two observables are related through a linear operator, then their modes are also related via the same linear operator. For example, consider the field of stream function $\psi$ and the velocity field $\mathbf{u}$ in an incompressible 2D flow. These two observables are related thorough the linear operator $\nabla^\perp:=[\partial/\partial y,-\partial/\partial x]^T$, that is, $\mathbf{u}=\nabla^\perp \psi$. Let $\psi_j$ and $\mathbf{u}_j$ denote the Koopman modes of these two observable fields associated with Koopman eigenvalue $\lambda_j$, then
\begin{equation}
\mathbf{u}_j= \nabla^\perp \psi_j,\quad j=1,2,3,\ldots.	\label{eq:psitov}
\end{equation}
A similar relationship could be established between the Koopman modes of the vorticity field, denoted by $\boldsymbol{\vor}_j$, and those of the velocity field,
\begin{equation}
\boldsymbol{\vor}_j= \mathbf\nabla \times \mathbf{u}_j,\quad j=1,2,3,\ldots.\label{eq:vtow}
\end{equation}
This further implies that applying KMD to either of the these observable fields yields the same Koopman eigenvalues as long as none of the modes lie in the null space of the linear operator.

The knowledge of any of the above observable fields, i.e., stream function, velocity field or vorticity, uniquely determines the state of the system and therefore it can be used to elicit the Koopman spectrum of all other observables of interest.
Thus, we conclude that applying KMD to any of these fields would give us the information which is sufficient to detect the flow bifurcations.
In the dynamical analysis of the cavity flow, we choose the stream function as the primary observable for the application of KMD since its Koopman modes and eigenvalues are least expensive to compute. The Koopman modes of velocity and vorticity can be computed using (\ref{eq:psitov}) and (\ref{eq:vtow}).

\section{The lid-driven cavity flow}\label{sec:CFD}
The 2D lid-driven cavity flow is a simple model of an incompressible viscous fluid confined to a rectangular box with a moving lid.
This flow is usually used as a benchmark for numerical simulations, and represents a simplified model of geophysical flows driven by shear \cite{tseng2001mixing,gildor2010gulf},  and the flow inside a common type of mixer in polymer engineering \cite{chella1985fluid}. 
The 2D cavity flow is also realized in experiments using soap films  \cite{gharib1989liquid} (for experiments on 3D flow see \cite{koseff1984}).
This flow is particularly interesting for the Koopman analysis because it shows a wide range of dynamic behavior depending on the increase of the top lid velocity \cite{Shen1991,poliashenko1995direct,cazemier1998,auteri2002numerical,peng2003transition,balajewicz2013low}.


Our computational model of the flow consists of a square domain $[-1,1]^2$, with solid stationary boundaries, except the top lid (at $y=1$) which moves with a regularized velocity profile,
\begin{equation}
u_{lid} = (1-x^2)^2,\quad x\in[-1,1].
\end{equation}
This boundary condition has a low-order polynomial form which satisfies the continuity and incompressibility in the top corners (as opposed to the uniform velocity profile), and it is frequently used in numerical studies on cavity flow \cite[see e.g.][]{Shen1991,botella1997solution,balajewicz2013low}.

The incompressibility of the flow allows us to use the stream function formulation of the Navier-Stokes equation,
\begin{equation}
\frac{\partial }{\partial t} \nabla^2 \psi+ \frac{\partial \psi}{\partial y}\frac{\partial}{\partial x}\nabla^2 \psi - \frac{\partial  \psi}{\partial x}\frac{\partial}{\partial y}\nabla^2 \psi = \frac{1}{Re}\nabla^4 \psi,
\label{eq:si}
\end{equation}
subject to two types of boundary condition on the stream function,
\begin{equation}
\psi\bigg|_{\partial\Omega} =  0 \quad and \quad \frac{\partial \psi}{\partial n}\bigg|_{\partial\Omega}= u_w, \label{eq:BC}
\end{equation}
where the wall velocity $u_w$ is zero everywhere except at the top wall, where $u_w(y=1)=u_{lid}$. The solution of the cavity flow as described above is known to exist and be unique, and moreover, the flow trajectory asymptotically converges onto a universal attractor in  the state space \citep{temam1988infinite}.

For numerical solution, we have used the Chebyshev-spectral collocation method described in Ref. \onlinecite{trefethen2000spectral}. The stream function is approximated by a polynomial of order $N$ in spatial directions. This polynomial is determined by its values at the Chebyshev points,
\begin{eqnarray}
(x_i,y_j) = \left( \cos(\frac{i\pi} {N}),\cos(\frac{j\pi}{M})\right) \\ i=0,1,\ldots,N,\quad j=0,1,\ldots,M. \label{eq:Chebyshevgrid} \nonumber
\end{eqnarray}
Given the polynomial approximation and the prescribed boundary condition in (\ref{eq:BC}), we use the transformed variable $q(x,y)$ defined by
\begin{equation}
\psi(x,y)=(1-x^2)(1-y^2)q(x,y) \label{eq:si_q}.
\end{equation}
which satisfies the Dirichlet boundary condition identically, and turns the Neumann boundary condition  into Dirichlet boundary condition, i.e.,
\begin{eqnarray}
q(\pm 1,y) = q(x,-1) & = & 0, \\
q(x,+1)& = & -\frac{1}{2}u_{top}(x).
\end{eqnarray}
For the temporal discretization of the ordinary differential equations on $q(x_i,y_i)$, we have used the second-order Crank-Nicholson scheme for the diffusion terms and second-order Adams-Bashforth discretization for the convection terms. The flow solutions studied in this work are computed using zero initial velocity.
The numerical solutions of the steady flow obtained by our method agree with the results reported in Ref. \onlinecite{Shen1991}. There is also agreement on the time periods of the periodic flows between the two studies.  To the best of our knowledge, however, there are no reported benchmark solutions for quasi-periodic or aperiodic flow.

\section[KMD Computation]{Numerical computation of  Koopman spectrum and modes}\label{sec:numericalKoopman}

As discussed in section \ref{sec:KoopmanTheory}, the Koopman spectrum of post-transient flows lies on the imaginary axis, and its estimation reduces to the classical spectral analysis of flow signals. In this work, we are specially interested in flows that possess a continuous spectrum in addition to discrete frequencies.  Reliable estimation of each of these two components from data has a rich history in the context of signal processing and is still a subject of ongoing research. The DFT algorithm, by itself, gives a good approximation for the location of the discrete frequencies and there are a large number of the so-called \emph{high- or super-resolution} algorithms, based on DFT or otherwise, that improve the accuracy of such estimation.
For continuous spectra, however, DFT  is a poor estimator. Application of DFT to the autocovariance function in \eqref{eq:autocor} produces an estimate of spectral density with high fluctuations that \emph{do not} diminish with the increase of data samples \cite{stoica2005spectral}. 
Therefore, the algorithms developed to resolve continuous spectrum use some type of local averaging over frequency domain to reduce this variance. Conversely, this averaging process reduces the frequency resolution and makes these algorithms ill-suited  for detection of discrete spectra \cite{ghil2002advanced}. As a result a judicious combination of these methods should be used for computation of mixed spectrum (i.e. including both continuous and discrete parts).

Our strategy for computing the Koopman spectrum is to first detect and extract the discrete frequencies using a high-resolution algorithm, and then apply a continuous spectra estimator to the remainder. Note that most of the developed methods for accurate estimation of mixed spectrum are parametric, in the sense that they are based on specific models for the continuous spectrum such as colored or auto-regressive noise \cite{ghil2002advanced,li1996efficient,stoica1997cisoid}, which are not valid for typical chaotic dynamical systems.
Our methodology here is non-parametric,  and besides absolute continuity (discussed in \S\ref{subsec:bifurcationKMD}), we don't make any assumptions on the shape of the continuous spectrum. Instead, we connect our analysis to the theory of dynamical systems through the ergodicity assumption. Namely, given that the dynamics on the attractor is ergodic, we use the fact that the Koopman modes are unique (i.e. depend only on the observable and the flow parameters) which allows us to identify and discard the spurious discrete frequencies that are not robust with respect to the choice of initial condition or the time interval of integration.

The succession of ideas in this section are as follows: first, we describe the idea of harmonic averaging from classical ergodic theory which has proven convergence properties for computation of Koopman modes given the knowledge of Koopman frequencies. Then, we discuss the Laskar algorithm for computation of the discrete spectrum and benchmark its numerical performance against other high-resolution algorithms. In the last subsection, we discuss our procedure for approximation of Koopman continuous spectrum from the chaotic component of the data, and test its performance for two well-known chaotic dynamical systems.

\subsection{Harmonic averaging and DFT}
For post-transient flows, the Koopman eigenfunctions are orthogonal \cite{mezic2005} and the Koopman modes can be computed via direct projection of the observables onto the Koopman eigenfunctions. Let $\phi_j$ be the normalized Koopman eigenfunctions ($\|\phi_j\|=1$) associated with the frequency $\omega_j$. We observe that the Koopman eigenfunction evolves as $\phi^\tau(\mathbf{u}_0)= e^{i\omega \tau}$ over a single trajectory of the system.
Using the pointwsie ergodic theorem, we can compute the Koopman modes using the \emph{harmonic average},
\begin{equation}
\mathbf{g}_j:=<\mathbf{g},\phi_j>_H=\lim_{T\rightarrow \infty}\frac{1}{T}\int_{0}^{T}\mathbf{g}(\tau) e^{-i\omega_j \tau}d\tau. \label{eq:harmonicavg}
\end{equation}

The above limit is known to exist for almost every initial condition under the assumption that the dynamics on the attractor is preserving a measure \citep{wiener1941harmonic} - which is less restrictive than ergdocity.
The time series obtained by experiments and simulations consists of time-discrete samples over finite intervals. Assuming uniform sampling at time instants $\{\tau_0=0,\tau_1,\ldots,\tau_{N-1}=T\}$, we can approximate the harmonic average as
\begin{equation}
\mathbf{g}^N_j=\frac{1}{N}\sum_{k=0}^{N-1}\mathbf{g}(k) e^{-i\omega_j \tau_k}, \label{eq:harmonicavg2}
\end{equation}
where $\mathbf{g}(k)$ is the value of observable at the sampling time $\tau_k$. For any $\omega_j$ that is a Koopman frequency, we have $\mathbf{g}^N_j \rar \mathbf{g}_j$ as $N\rar\infty$, and otherwise $\mathbf{g}^N_j\rar 0$. For periodic and quasi-periodic attractors, the rate of convergence is proportional to $N$ \cite{mezic2002ergodic}, but for typical chaotic systems it scales with $\sqrt{N}$ \cite{krengel1985ergodic}.

Given a uniform sampling in time, we can use DFT frequencies as a rough approximation of the Koopman frequencies. Let the number of samples $N$ be even,  and denote the sampling interval by $\Delta\tau:=T/(N-1)$. The DFT grid of frequencies is
\begin{equation}
\Omega_j=\frac{2\pi j}{N\Delta \tau},\quad j=-\frac{N}{2},-\frac{N}{2}+1,\ldots,0,\ldots,\frac{N}{2}-1, \label{eq:Fourierfreq}
\end{equation}
Accordingly, computing the harmonic average in (\ref{eq:harmonicavg2}) reduces to computing the DFT amplitude of the observations,
\begin{equation}
\hat{\mathbf{g}}_j=\frac{1}{N}\sum_{k=0}^{N-1}\mathbf{g}(k) e^{-i\Omega_j k\Delta \tau}.
\end{equation}

DFT is already shown to be equivalent to DMD when applied to a linearly independent sequence of snapshots with zero mean \citep{chen2012variants}.
The advantage of using DFT to find the Koopman modes lies in its relative simplicity and the availability of Fast Fourier Transform (FFT) algorithms for its implementation. On the other hand, it suffers from two basic shortcomings.
The first one is the \emph{picket fencing}, i.e., the Koopman frequencies depend on the dynamics and may occupy arbitrary locations on the real interval, whereas DFT frequencies are determined by the sampling rate and observation interval. The second phenomenon, known as \emph{spectral leakage}, refers to the spillage of energy from a frequency to its neighborhood, and it is due to the finite length of observation interval which leads to errors in approximation of the modes \cite{oppenheim1999discrete,stoica2005spectral}. In the following, we discuss some of the methods developed to overcome these problems.

\subsection{Estimation of Koopman frequencies}
The problem of detecting discrete frequencies from noisy signals is often called \emph{line spectral estimation}.
The general goal of line spectral estimation methods is to compute estimates of frequencies with errors smaller than the DFT frequency resolution.
Many of such methods use DFT as a preliminary step because of its computational efficiency, and often utilize a combination of windowing and interpolation to reduce the leakage and fencing problem (see e.g. \cite{jain1979high,andria1989windows,agrez2002weighted}). Some other techniques, including Prony analysis \cite{de1795essai} and Nonlinear Last-Squares (NLS) method \cite{stoica2005spectral}, treat the line spectral estimation as a data fitting problem to find the frequencies and associated amplitudes that represent the time series with least error. These methods do not face the the DFT shortcomings, but they are more costly for computation and suffer vulnerability to noise (e.g. Prony analysis) or the choice of initial guess for the optimal values of fitting (e.g. NLS) \cite{stoica2005spectral}. We note that using Prony analysis to compute the Koopman modes is explored in \cite{susuki2015prony}.

There are also the so-called \emph{subspace} techniques which exploit the linear algebraic properties of matrices that embed the signal information. The two most popular algorithm in this class are the MUSIC \cite{schmidt1986multiple} and ESPRIT \cite{roy1989esprit} which use eigen-decomposition of the data covariance matrix. These methods circumvent the obstacles of the DFT-based methods by posing the frequency estimation as an eigenvalue problem, which leads to accurate estimates at a higher computational complexity due to the embedding of time-series in large matrices. Moreover, these methods are parametric and their good performance is only guaranteed when the noise follows a pre-determined model (which is usually white noise) \cite{stoica2005spectral}.
A more recent class of super-resolution algorithms recast the line spectral estimation as a convex optimization of measures on the frequency domain \cite{candes2014towards,fernandez2016super}. Under the two conditions of spectral sparsity and minimum separation between the frequencies, this framework recovers the exact values of frequencies from a finite number of time samples. Unfortunately, this framework has a high computational complexity and it is only suitable for discrete spectrum identification in presence of little noise.

In this work, we adapt the algorithm suggested by Laskar \cite{laskar1990chaotic,laskar1992measure} to compute the discrete Koopman frequencies and the associated modes. This algorithm is attractive for two reasons: first, it makes explicit use of harmonic averaging which allows us to assess its convergence based on the theory. In fact, this algorithm was invented to detect chaotic motion from data in Hamiltonian systems with a moderate number of degrees of freedom, like the solar system (see \cite{laskar1990chaotic}). Secondly, this algorithm is related to a popular sparse approximation technique known as Orthogonal Matching Pursuit (OMP)\cite{tropp2007signal}. OMP efficiently approximates a sparse vector (i.e. vector with few non-zero elements) given a relatively small number of linear measurements on the sparse vector through an iterative greedy algorithm. Different variants of this algorithm are frequently used in decomposition of signals and images into sinusoids, wavelets etc. (see e.g. \cite{mallat1993matching,wang2006seismic,fannjiang2012coherence,mamandipoor2016newtonized}).

The main idea in Laskar algorithm is to discretize the frequency domain and use OMP (implemented as FFT and harmonic averaging) to find the frequencies and associated amplitudes that best explain the time-sampled values of the observables (which are the linear measurements in the sense of OMP). The algorithm also uses windowing and adaptive refinement of the initial grid to diminish the effect of spectral leakage and picket fencing. The computational steps are outlined in algorithm \ref{alg:Laskar}, and below we describe the structure of data matrix used as the input.

Let $\{\mathbf{g}(0),\mathbf{g}(1),\ldots,\mathbf{g}({N-1})\}$ be the set of observations on the vector-valued observable $\mathbf{g}$, made on uniformly-spaced time instants $\{\tau_0=0,\tau_1,\ldots,\tau_{N-1}=T\}$. The snapshot data matrix $G$ is defined as
\begin{equation} \label{eq:snapshots}
G=[\mathbf{g}(0)|\mathbf{g}(1)|\ldots|\mathbf{g}(N-1)].
\end{equation}
Also let $\|\cdot\|$ denote an appropriate vector norm on $\mathbf g (\cdot)$, and $\sigma$ denote the expected $\|\cdot\|$-norm of measurement or computation noise in the data. We denote by $S(\omega)$ the sinusoid associated with frequency $\omega$, that is
\begin{equation}\label{eq:sin}
  S(\omega)=[1,~e^{i\omega \tau_1},~e^{i\omega \tau_2},\ldots, e^{i\omega \tau_{N-1}} ]^T.
\end{equation}
We also make use of windowing functions in the general form of a weight vector:
\begin{equation}\label{eq:window}
  W=[w_0,~w_1,~w_2,\ldots,w_{N-1}]^T.
\end{equation}

\begin{algorithm}[H]
\caption{ (Adapted) Laskar algorithm for estimation of Koopman frequencies} \label{alg:Laskar}
\begin{algorithmic}[1]
\REQUIRE{Snapshot matrix $G_{M\times N}$}.
\ENSURE{Set of Koopman frequencies $\Omega=\{\omega_1,\ldots,\omega_m\}$ and matrix of Koopman modes $V=[\mathbf{g}_1,\ldots,\mathbf{g}_m]$.  }
\STATE{Let $R=G$  and initialize the dictionaries $D=[~]$ and $\Omega=\{~\}$.}
\STATE{Apply row-wise FFT to $R$. Pick the DFT frequency $\hat{\omega}:=\omega_j$ which yields the complex amplitude $\mathbf{g}_j$ with highest $\|\cdot\|$-norm and satisfying $\|\mathbf{g}_j\|>\sigma$. If there is no such frequency proceed to step \ref{alg:Laskar7}.  }\label{alg:Laskar2}
\STATE{Compute the windowed harmonic average
\begin{equation}\label{eq:windowedharmonicavg}
  V_\omega = S^*(\omega) diag(W) R^T
\end{equation}
over a refined grid of frequencies centered around $\hat \omega$.
 Pick the frequency $\omega_k$ that yields $V_{\omega_k}$ with the highest $\|\cdot\|$-norm.}
\STATE{ Add $\omega_k$ and its sinusoid to the dictionary:
\begin{eqnarray}\label{eq:augmentDic}
  \Omega &\leftarrow& \Omega \cup \{\omega_k\}, \\
   D&\leftarrow& [D~S(\omega_k)].
\end{eqnarray}
}
\STATE{ Solve the least-square problem
\begin{equation}\label{eq:leastsqr}
   W = {\arg\min}_{\hat{W}}\|G^T-D\hat{W}\|_{fro}.
\end{equation}}
\STATE{Compute the new remainder $R$ by subtracting the contribution of the frequencies in the dictionary
\begin{equation}\label{eq:remainder}
   R = G^T-DW.
\end{equation}}
\STATE{Go to step \ref{alg:Laskar2}.}
\STATE{Return $\Omega$ and $V=W^T$. \label{alg:Laskar7}}
\end{algorithmic}
\end{algorithm}

\textbf{Choice of data matrix $G$ and appropriate norm:}
We have applied the above algorithm to the vector of the stream function ($\psi$) values at the computational grid points given in \eqref{eq:Chebyshevgrid}. The Koopman modes of velocity and vorticity are subsequently computed using \eqref{eq:psitov} and \eqref{eq:vtow}. The results reported in this paper are computed using the sampling rate $\omega_s=10~sec^{-1}$, where $sec$ is the unit of the the characteristic time given by
\begin{equation}
1~sec:=\frac{L_R}{U_R} \label{eq:stu_def}
\end{equation}
and $L_R$ and $U_R$ are the half of the cavity side length and maximum velocity on the top lid, respectively. Our numerical experiments show that the computed frequencies are independent of the sampling frequency $\omega_s\in[10,200]$.
We have chosen the vector-norm in the above algorithm such that it reflects the kinetic-energy norm of the Koopman modes, that is,
\begin{equation}
\|\mathbf{u}_j\|:=\|\mathbf{u}_j\|_{KE}=\left(\frac{\int_\Omega |\nabla^\perp \psi_j|^2 ds}{U_R^2}\right)^{1/2}. \label{eq:KEnorm}
\end{equation}
with $U_R$ denoting the maximum velocity on the top lid. We choose $\sigma^2=10^{-6}\|\mathbf{u}_0\|^2$. This is a heuristic choice and reflects how strong a periodic component we want to resolve.  Other factors that might be considered are the accuracy of the numerical simulation and the computational cost.

\textbf{Dictionary of frequencies for real-valued data and choice of the window function:}
Given that the spectrum is symmetric for real-valued data, we can effectively reduce the computational cost by doing the search and refinement (step 1 and 2) for a non-zero frequency $\omega_k$ and then add the pair $(-\omega_k,\omega_k)$ to the dictionary in step 3.
To evaluate the filtered harmonic average in step 2, we use the Hann window given by
\begin{equation}\label{eq:Hann}
  w(k)= \frac{1}{2} + \frac{1}{2}\cos(\pi k /N).
\end{equation}
Using the window function is not necessary but improves the detection of frequencies that are close to each other - in the case of quasi-periodic flow - as it reduces the local spectral leakage.
Alternative window functions can be chosen based on the proximity and relative strength of the frequencies (see e.g. \cite{stoica2005spectral}).

\textbf{Least-square projection and harmonic average:}
The least-square problem in step 4 is equivalent to orthogonal projection of observables onto the Koopman eigenfunctions. In fact, in the limit of $N\rar\infty$, the computation of Koopman modes in step 4 reduces to the harmonic average in \eqref{eq:harmonicavg}. To see this, note that the solution to \eqref{eq:leastsqr} is given by $W=D^\dagger G^T$, however as $N\rar \infty$, the columns of $D$ become orthogonal and $D^\dagger\rar (1/N) D^*$. It is easy to check that $(1/N)D^*G^T$ yields the harmonic average of columns of $G^T$, i.e., the Koopman modes.

 \begin{figure*}
	\centerline{\includegraphics[width=1\textwidth]{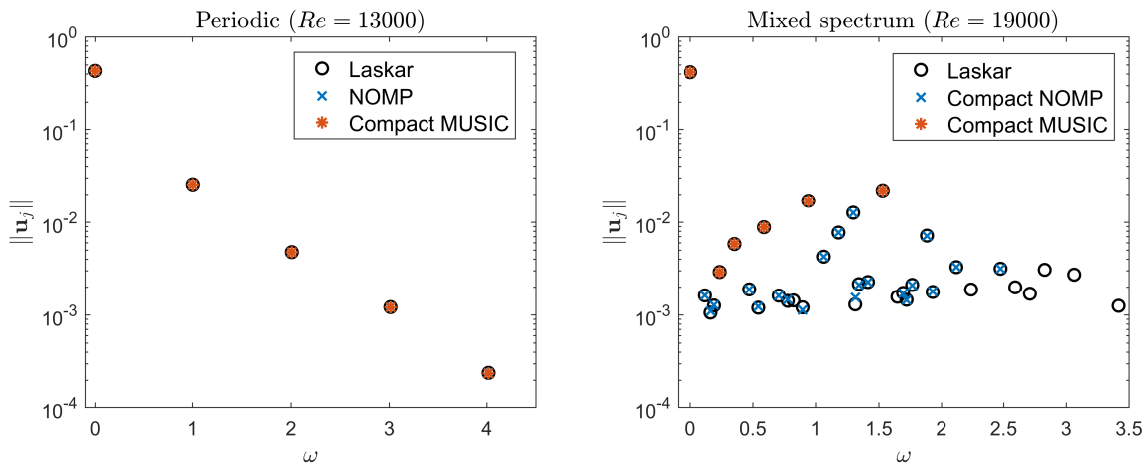}}
	\caption{Comparison of Laskar Algorithm with MUSIC \cite{schmidt1986multiple} and NOMP\cite{mamandipoor2016newtonized}. In the right panel, the continuous spectrum and frequencies with $\|\mathbf{u}_j\|<10^{-3}$ are omitted to avoid clutter. }
	\label{fig:LaskarBenchmark}
\end{figure*}

\textbf{Benchmark:} We compare the performance of Laskar algorithm to two other high-resolution algorithms. The first one is the Newtonized Orthogonal Matching Pursuit (NOMP) \cite{mamandipoor2016newtonized}. The benchmark study in \cite{mamandipoor2016newtonized} suggests that NOMP is a near-optimal algorithm in the sense that its accuracy is close to the theoretical limit. This algorithm is similar to Laskar, but one of its distinctive features is the refinement of \emph{all} frequency estimates after the detection of each new frequency. The computational run-time of NOMP is approximately $O(K^2)$ longer than Laskar, where $K$ is the number of detected frequencies.
We also implement the MUSIC algorithm using the \texttt{rootmusic()} function in MATLAB. This algorithm estimates the frequencies based on the eigen-decomposition of the data covariance matrix and has a high computational complexity which does not allow its implementation on large number of observables. In case of large data sets where these two algorithms cannot be applied to the all data, we implement them in the compact form, i.e., we choose the sampling of the stream function at 10 random points, and apply MUSIC/NOMP to compute the Koopman frequencies and then compute the Koopman modes using an orthogonal projection onto the dictionary of sinusoids.

Spectral estimation by MUSIC and NOMP do not suffer from the fencing problem since they don't use the discretization of the frequency domain, and therefore provide accurate estimates of Koopman frequencies for (purely) periodic and quasi-periodic flows. The accuracy of Laskar algorithm, on the other hand, depends on the choice of adaptive frequency grid, and higher resolutions requires evaluating \eqref{eq:windowedharmonicavg} over finer grids.
However, comparison (in FIG. \ref{fig:LaskarBenchmark}) shows that Laskar algorithm is better suited for computing the Koopman frequencies and modes in case of large flow data with mixed spectrum. The reason for this is two-fold: Many super-resolution algorithms are designed based on a special model for the noise spectrum (e.g., the MUSIC algorithm relies on the white noise model), which does not hold for the continuous spectrum of dynamical systems. (In the context of detecting discrete frequencies, we treat the continuous spectrum as noise.)
The second reason is related to the computational cost of the algorithm. The super-resolution algorithms that are not based on FFT are computationally expensive, and they can only be applied in the compact form, (i.e. applying to one or few observables simultaneously). In such cases, they may fail to capture many of low-energy frequencies in the presence of noise, as in the case of mixed spectra shown in FIG. \ref{fig:LaskarBenchmark}. (We have verified those frequencies by the criteria introduced in the next section and the fact that they lie on the lattice of frequencies described in \eqref{eq:KMDqperiodic}).
In contrast, the computational parsimony of the Laskar algorithm allows estimation of the frequencies using a larger set of (and possibly all) the observables. This, in turn, increases the effective signal-to-noise ratio for the low-energy periodic components which leads to detection of their associated frequencies.

\subsection{Detection of the spurious frequencies in flows with mixed spectra}\label{sec:VariabilityTest}
In the case of mixed spectra, application of high-resolution methods to the data might produce peaks that are not genuine Koopman frequencies, but artifacts of the continuous spectrum. In such flows, we need a criteria to distinguish such peaks from the actual frequencies.
On the other hand, the assumption of the ergodicity implies that the Koopman modes are unique (i.e. don't depend on the initial condition), and therefore Koopman modes computed over different (and sufficiently long) intervals should be the same. To use this notion we run Laskar'a algorithm on different (overlapping or non-overlapping) chunks of the  snapshot matrix.  We discard the frequencies whose associated modes show too much variability depending on the time interval of computation.
In the results to be presented, we have discarded the modes that show more than \%5 variability in the kinetic energy norm, while the modes are computed over intervals of 1000 $sec$ and longer.

\subsection{Estimation of Koopman continuous spectrum}
Recall from \S\ref{sec:stochasticprocess}, that the spectral density of the Koopman operator appearing in \eqref{eq:KMDgeneral2} coincides with the PSD of the chaotic component of signals generated by measuring observables. If $g$ is a real-valued observable with purely continuous spectrum, then we can approximate the autocovariance function in \eqref{eq:autocor} using the time series data
\begin{equation}\label{eq:autocarr_approx}
  r(\pm\tau)=\frac{1}{N}\sum_{k=0}^{N-1}g(k)g(k+\tau), \quad 0\leq\tau\leq N-1,
\end{equation}
and compute the \emph{correlogram} approximation of PSD as
\begin{equation}\label{eq:correlogram}
  \rho_g(\alpha)= \sum_{\tau=-(N-1)}^{N-1}r(\tau)e^{-i\alpha\tau}, \quad \alpha\in[0,2\pi).
\end{equation}
Such a direct evaluation produces a highly fluctuating estimate of $\rho_g$, and the computation must be modified to get a more reliable estimate \cite{stoica1989statistical}.

We use the Welch method \cite{welch1967use} to approximate the continuous part of the Koopman spectrum. The idea behind this algorithm is simple: it approximates the spectral density of the signal over small (and possibly overlapping) subsamples of the data using FFT, and averages the computed densities over all those subsamples. The averaging process reduces the variance of PSD estimation by a factor that is equal to the number of subsamples \cite{welch1967use}.  This reduction in the variance comes at the price of low spectral resolution (which increases with the length of subsamples), and using too many subsamples may result in over-averaging and getting a flat spectrum. Therefore, the number and length of windows should be chosen carefully to maintain an accurate estimate while resolving the distribution of energy over frequencies.

The Welch method reduces to the Bartlett method \cite{bartlett1950periodogram} in case of non-overlapping subsamples. Bartlett was among the first to realize that different subsamples of the data can be averaged to find a better estimate of the spectral density given that the autocovariance decays rapidly enough and the subsamples are sufficiently large. This methodology can be interpreted as special case in the well-known class of Balckman-Tukey estimators \cite{blackman1958measurement} and the general class of filter bank approaches. We refer the reader to Ref. \onlinecite{stoica2005spectral} for a discussion of connections between these methods.

We apply the Welch method, outlined in algorithm \ref{alg:Welch}, to the chaotic component of the velocity field. This component is computed by extracting the contribution of Koopman modes from the original data matrix. To measure the contribution of the continuous spectrum to the whole flow field, we compute the kinetic energy density of the continuous spectrum given by
\begin{equation}
p(\omega):=\frac{1}{U_R^2}\int_\Omega \rho_\mathbf{u}(\omega) ds. \label{eq:KEnormPSD}
\end{equation}
This definition would allows us to compute the kinetic energy content of each frequency interval via integrating $p(\omega)$ over that interval, i.e.,
\begin{equation}
P(I):=\frac{1}{2\pi}\int_I p(\omega)d\omega.
\end{equation}
and we will recover the average kinetic energy of chaotic fluctuations by calculating $P([0,2\pi))$.


 \begin{algorithm}[H]
\caption{ Welch method for estimation of Koopman continuous spectrum} \label{alg:Welch}
\begin{algorithmic}[1]
\REQUIRE{Snapshot matrix $G_{M\times N}$, length of subsamples $L$ and overlapping length $K$. }
\ENSURE{Matrix of spectral densities $R_{M\times L}$. }
\STATE{ Let $S=\text{FLOOR}[(M-L)/{K}]-1$ be the number of subsamples. }
 \FOR{the $i$-th row of $G$ denoted by $r$}
 \STATE {Divide $r$ into $S$ subsamples given by
 \begin{equation*}
   r_j(m)=r((j-1)K+m),~m=1,2,\ldots,L,\quad j=1,2,\ldots,S.
 \end{equation*}
}
 \FOR{each subsample $r_j$}
 \STATE {Use FFT to compute the PSD of $r_j$ \eqref{eq:correlogram} and denote it by $\phi_j(\omega_k)$ where $\omega_k$ with $k=,1,\ldots,L$ are the $L$-point FFT frequencies.}
 \ENDFOR
 \STATE{Let
 \begin{equation*}
   R_{ik} = \frac{1}{S}\sum_{j=1}^{S}\phi_j(\omega_k)
 \end{equation*}
 }
 \ENDFOR
 \STATE{Return $R$ and $\Omega=\{\omega_0,\omega_1,\ldots,\omega_L\}$.}
\end{algorithmic}
\end{algorithm}

We test the Welch algorithm using two well-known chaotic dynamical systems. The first one is a discrete-time map on a periodic 2D domain, known as Arnold's cat map, and given by
\begin{eqnarray*}
  x(t+1) &=& 2x(t)+y(t) \mod 1, \\
  y(t+1) &=& x(t)+y(t) \mod 1.
\end{eqnarray*}
For the choice of observables
\begin{eqnarray*}
  g_1(x,y) &=& e^{2\pi i (2x+y)} + \frac{1}{2}e^{2\pi i (5x+3y)}, \\
  g_2(x,y) &=& g_1(x,y) + \frac{1}{4}e^{2\pi i(13x+8y)}.
\end{eqnarray*}
the Koopman spectral density is known in analytical form \cite{Nithin2017convergent},
\begin{eqnarray*}
  \rho(g_1;\theta) &=& \frac{1}{2\pi}\bigg(\frac{5}{4}+\cos\theta\bigg), \\
  \rho(g_2;\theta) &=& \frac{1}{2\pi}\bigg(\frac{21}{16}+\frac{5}{4}\cos\theta + \frac{1}{2}\cos 2\theta\bigg).
\end{eqnarray*}
where $\theta\in[0,2\pi)$ is the discrete-time frequency.

The second system that we consider is the chaotic Lorenz system:
 \begin{eqnarray*}
  \dot x &=& 10(y-x), \\
  \dot y &=& x(28-z)-y,\\
  \dot z &=& xy-\frac{8}{3}z.
\end{eqnarray*}
This system is known to  have only continuous spectrum (except the zero frequency) \cite{luzzatto2005lorenz}, but no analytical expression exists for the spectral density of non-trivial observables. We compare the Welch estimation of the Lorenz spectrum with the recent results in \cite{korda2017data} which is based on the approximation of Fourier moments of the spectral measure and the Christoffel-Darboux kernel. In particular, we consider the observable
\begin{eqnarray*}
  g_3(x,y,z) &=& x.
\end{eqnarray*}
and compute its spectral density $\rho(g_3;\omega)$ where $\omega\in [0,\omega_s/2)$ is the continuous time frequency and sampling frequency $\omega_s$ is $10\pi$.
The comparison in FIG. \ref{fig:PSD} shows great agreement between the results of Welch method, analytic densities of the cat map and the numerical results of \cite{korda2017data} on Lorenz system.

 \begin{figure*}
	\centerline{\includegraphics[width=1\textwidth]{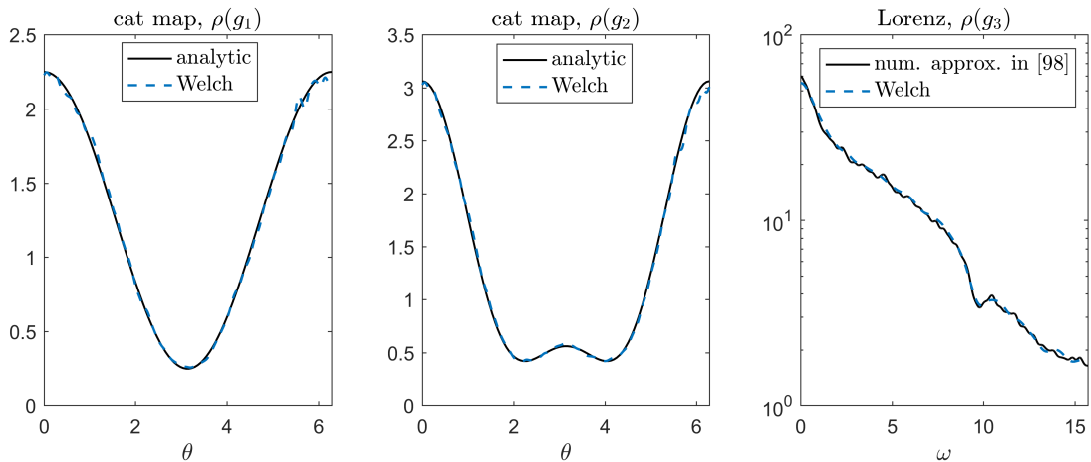}}
	\caption{Validation of Welch method for computing Koopman spectral density in Arnold's cat map (the left two panels) and chaotic  Lorenz system. The densities for cat map are known in analytic form \cite{Nithin2017convergent}, and the numerical benchmark results for Lorenz system are from \cite{korda2017data}. }
	\label{fig:PSD}
\end{figure*}

\section{Cavity flow dynamics and KMD}  \label{sec:Koopmanresults}

\subsection{ Koopman spectrum}
Figure \ref{fig:Spectrum} shows the distribution of kinetic energy in the discrete and continuous spectrum of the Koopman operator. The energy contained in Koopman modes (black bars in the figure) is simply the kinetic energy contained in each mode, but the representation of energy over the continuous spectrum is slightly different: the amount of energy contained at each frequency interval is the integral of the kinetic energy density of continuous spectrum (defined in \eqref{eq:KEnormPSD} and shown as the dashed blue curve) over that interval.

The evolution of the Koopman spectrum in FIG. \ref{fig:Spectrum} indicates the sequence of the bifurcations as follows: for $Re\leq10000$, the cavity flow induced by regularized lid velocity converges to a steady laminar solution which corresponds to a fixed point in the state space of the flow. The Koopman mode expansion for steady flow is trivial (hence not shown) and consists of zero frequency with an associated mode which is the steady flow. At a Reynolds number slightly above 10000, the steady solution becomes unstable and the numerical solution converges to a time-periodic flow which maintains stability up to $Re=15000$. The kinetic energy in this range is fully distributed in the Koopman modes. The basic frequency of periodic flow decreases with the Reynolds number, until at $Re\geq 15000$, another bifurcation occurs and the solution converges to a quasi-periodic flow.
The basic frequencies of the quasi-periodic flow also decrease with the $Re$, but around $Re=18000$ another bifurcation occurs and the level of kinetic energy lying in continuous spectrum quickly rises to a few percent. This kinetic energy of continuous spectra keeps rising such that at $Re\geq 22000$ we cannot detect any robust Koopman modes which indicates there are no quasi-periodic components in the state space dynamics.

The discrete Koopman frequencies obtained for periodic, quasi-periodic, and interestingly, the mixed-spectra flow match the lattice structure of the frequencies in the KMD of quasi-periodic flow (\ref{eq:KMDqperiodic}). That is, every frequency is accurately described by the integer combination of one or two basic frequencies  (see table \ref{tab:freq} in Appendix \ref{app:freq}). From the representation theorem mentioned in \S\ref{sec:KoopmanTheory}, we recall that this means the attractor is shaped like a limit cycle or a torus in the state space. For the flow with the mixed spectra, however, no such theorem exists but we can speculate that it consists of both a quasi-periodic factor and a chaotic factor. This type of attractor is called \emph{skew-periodic} in the literature of dynamical systems theory  \cite{broer1993mixed}.

\begin{figure*}
	\centerline{\includegraphics[width=1.2\textwidth]{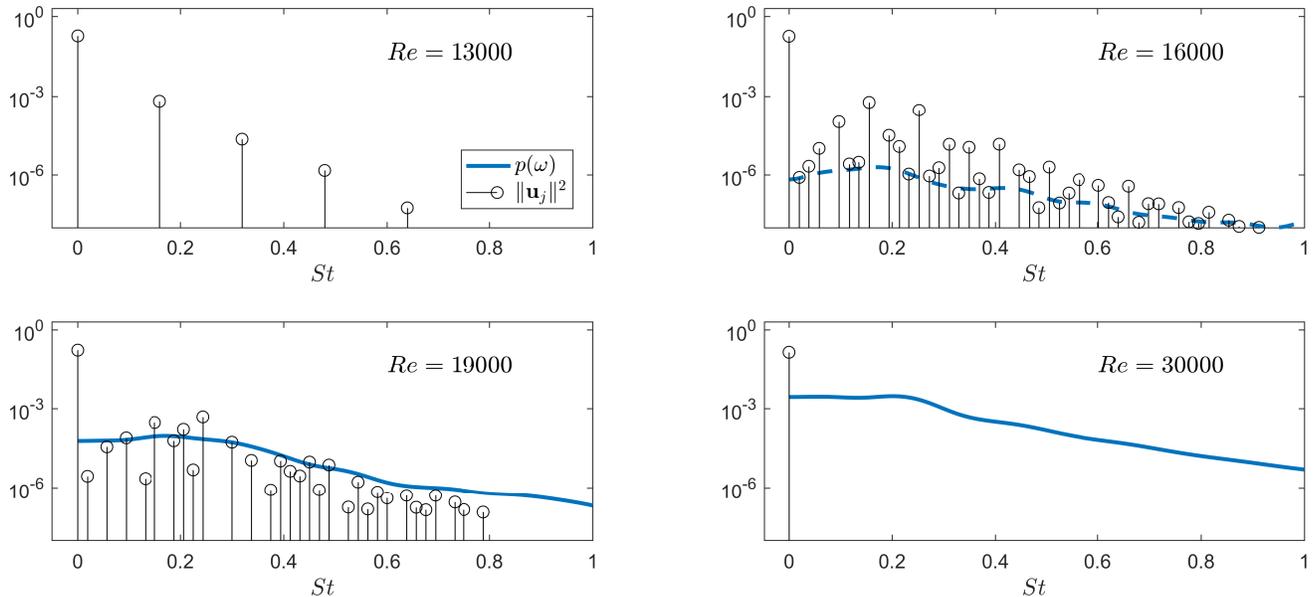}}
	\caption{Distribution of energy in the discrete (black bar) and continuous (blue curve) spectrum of the Koopman operator for cavity flow. The evolution of spectrum indicates the transition from periodic to chaotic flow. The ratio of average chaotic fluctuation energy to total kinetic energy of unsteady motion is $1.5\times10^{-3},~0.11$ and $1.00$ for $Re=16000,19000$ and $30000$, respectively.  
}
	\label{fig:Spectrum}
\end{figure*}

The evolution of the Koopman spectrum in FIG. \ref{fig:Spectrum} offers a picture of transition to chaos that is consistent with the theory of \citet{ruelle1971nature}.
According to this theory, the chaotic state of the flow can be reached after one or two Hopf bifurcations from an initially stable steady flow.
Physical evidence for this theory appeared in the experiments on rotating Couette flow and natural convection by \citet{swinney1978transition}. In particular, they detected the flow bifurcations using the power spectrum of velocity measurements at a single point in the flow domain. The transition to chaos was marked by the sudden growth of ``background noise" in the power spectrum of the quasi-periodic flow.
The above results show that the Koopman spectrum can be used as a generalized spectral tool for study of bifurcations; it offers a clear quantification of the energy in terms of true periodicity and contribution of continuous spectra for deterministic flows, and furthermore it connects the discrete spectrum to the state space geometry and flow domain.

\subsection{ Koopman eigenfunctions} \label{sec:Kefun}

In this section, we discuss the relationship between the Koopman spectrum and the state space dynamics. This relationship is realized through the notion of Koopman eigenfunctions, which are associated with the Koopman eigenvalues. For post-transient flows, the eigenfunctions provide an intrinsic coordinate on the state space along which the time evolution is linear oscillation. First, we construct the Koopman eigenfunctions for the quasi-periodic cavity flow using the theory presented in \cite{mezic2017koopman,mezic2004comparison}, and then we discuss its application to the flow with mixed spectra. Note that for flows with ultimately high Reynolds (e.g. $Re=30000$), the only Koopman frequency is zero and there are no oscillatory Koopman eigenfunctions and modes.

 Consider the quasi-periodic flow at $Re=16000$. This flow possesses a torus-shaped attractor and the state space trajectory evolves on this torus with two basic frequencies $\omega_1$ and $\omega_2$ (whose non-dimensional values are given in Appendix \ref{app:freq}).
We can parameterize the torus using two \emph{time-linear coordinates}, that is $(\theta,\gamma) \in [0,2\pi)^2$ with the linear evolution equation
\begin{eqnarray}  
  \dot{\theta} &=& \omega_1, \nonumber\\
  \dot{\gamma} &=& \omega_2. \nonumber
\end{eqnarray}
The evolution of the trajectories on the actual torus in the state space is nonlinear, but the tuple $(\theta,\gamma)$ are angular coordinates along a torus with uniform flow which is dynamically equivalent to the actual torus in the state space \cite{mezic2017koopman}. On this time-linear coordinates the Koopman eigenfunctions are the same as the Fourier functions, i.e., the Koopman eigenfunction $\phi_{k,l}$ associated with frequency $\omega_{k,l}=k\omega_1+l\omega_2$ is
\begin{equation}
\phi_{k,l}(\theta,\gamma)=e^{ik\theta+il\gamma}. \label{eq:EF}
\end{equation}
Although there is no analytical formula for transformation from the actual attractor to the time-linear coordinates defined above, we can construct the eigenfunctions in the state space using information on a trajectory, given that the trajectory is ergodic and sufficiently long. We can normalize an the eigenfunctions such that $\phi^{t=0}(\mathbf{u})=1$, and hence the value of the eigenfunction along the trajectory is given by setting $\theta=\omega_1t$ and $\gamma=\omega_2 t$ in \eqref{eq:EF}.

Figure \ref{fig:EFs} shows the construction of the eigenfunctions in the state space of the flow. The state space is realized by delay embedding of some typical observables \cite{takens1981detecting} - in this case the stream function at random points in the flow domain. The attractor of the periodic flow is a limit cycle (top row in the figure) and the Koopman eigenfunctions (shown as color field) correspond to the one dimensional linear time coordinate ($\theta\in[0,2\pi)$). For the quasi-periodic flow, the attractor is a 2-torus and the Koopman eigenfunctions show the directions on the torus where the evolution is linear and periodic, e.g., the eigenfunction $\phi_{0,1}$, shown in the rightmost panel of the second row shows the coordinate $\gamma$ along which the trajectories oscillate with frequency $\omega_2$.

For skew-periodic attractors (i.e. flow with mixed spectra) the eigenfunctions are even more interesting because they provide coordinates on an attractor which is not exactly a torus, but possesses directions with periodic motion.
The embedded attractor of the flow at $Re=19000$, for example, is similar to a torus which is related to the fact that this flow possesses a strong discrete spectrum (in the energy sense) with two basic frequencies and a relatively weak continuous spectrum. In fact, using the Koopman eigenfunctions, we can  compute the \emph{factors} (i.e. geometric slice) of such an attractor, on which, the motion is purely quasi-periodic. The existence of such factorization for systems with discrete Koopman eigenvalues was shown in Ref. \onlinecite{mezic2004comparison}. Here, we use this idea to reconstruct the quasi-periodic component of the attractor at $Re=19000$. Let $E$ be the observable whose embedding is used to construct the attractor. According to the \eqref{eq:KMDgeneral}, this observable can be split to two components:
\begin{equation}\label{eq:functionsplit}
  E=E_{qp}+E_{c},
\end{equation}
where $E_{qp}$ denotes the component of $E$ that lies in the span of Koopman eigenfunctions (including the eigenfunction at zero frequency), and $E_c$ is the chaotic component that belongs to subspace associated with the continuous spectrum. By doing KMD on $E$, we can extract its Koopman modes and reconstruct the evolution quasi-periodic component $E_{qp}$ over the trajectory which is given by the first two terms in \eqref{eq:KMDgeneral}. The embedding of $E_{qp}$ constructs the torus which corresponds to the quasi-periodic part of the motion.
As such, the general motion on the skew-periodic attractor (third row of the figure) can be decomposed into rotational motion along its quasi-periodic component (bottom row) superposed with chaotic motion in an unknown direction.
We stress that the above constructions are valid for any type of state space realization as long as the data on an ergodic trajectory is available.

\begin{figure*}
	\centerline{\includegraphics[width=1\textwidth]{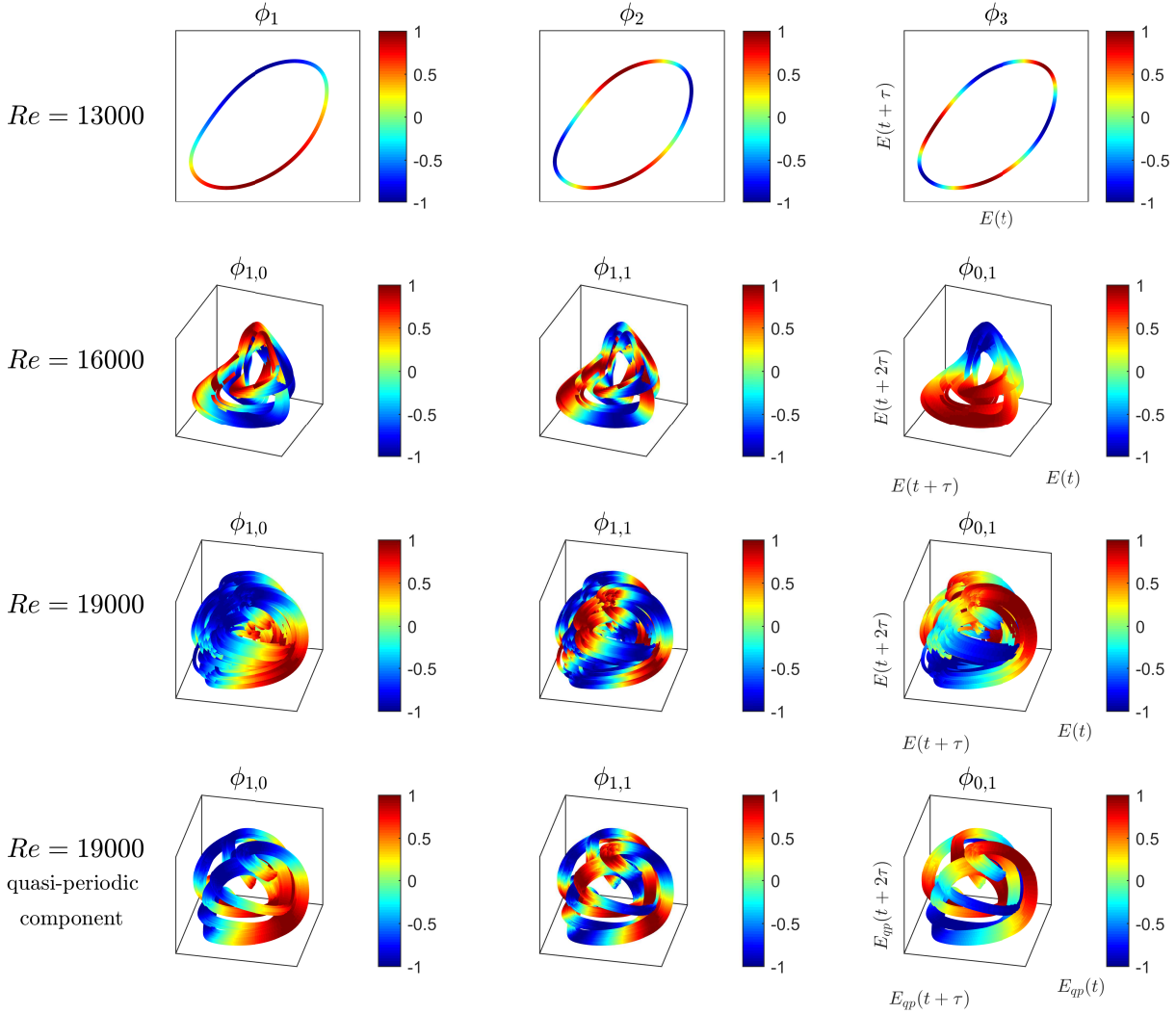}}
	\caption{ The (real part of) Koopman eigenfunctions shown as color field on the periodic, quasi-periodic and skew-periodic attractors (row 1-3). The last row is the quasi-periodic component of the skew-periodic attractor extracted using Koopman eigenfunctions.  The attractors are reconstructed using delay embedding of stream function values at random points in the flow domain (E) with the time delay of $1.0~sec$ for the periodic flow, and $2.4~sec$ for the rest.}
	\label{fig:EFs}
\end{figure*}

\subsection{ Koopman modes} \label{sec:modes}

The Koopman modes of the vorticity field associated with the basic frequencies of each flow are shown in FIG. \ref{fig:Modes}.
Each mode can be interpreted as the component of the vorticity field along the eigenfunction coordinates in the state space (the color field in FIG. \ref{fig:EFs}). For the eigenfunction at zero frequency, this component is the mean flow (Koopman mode associated with zero frequency) and does not change in time. The oscillatory modes however are components of the vorticity field that linearly oscillate along the eigenfunction directions.

\begin{figure*}
	\centerline{\includegraphics[width=1 \textwidth]{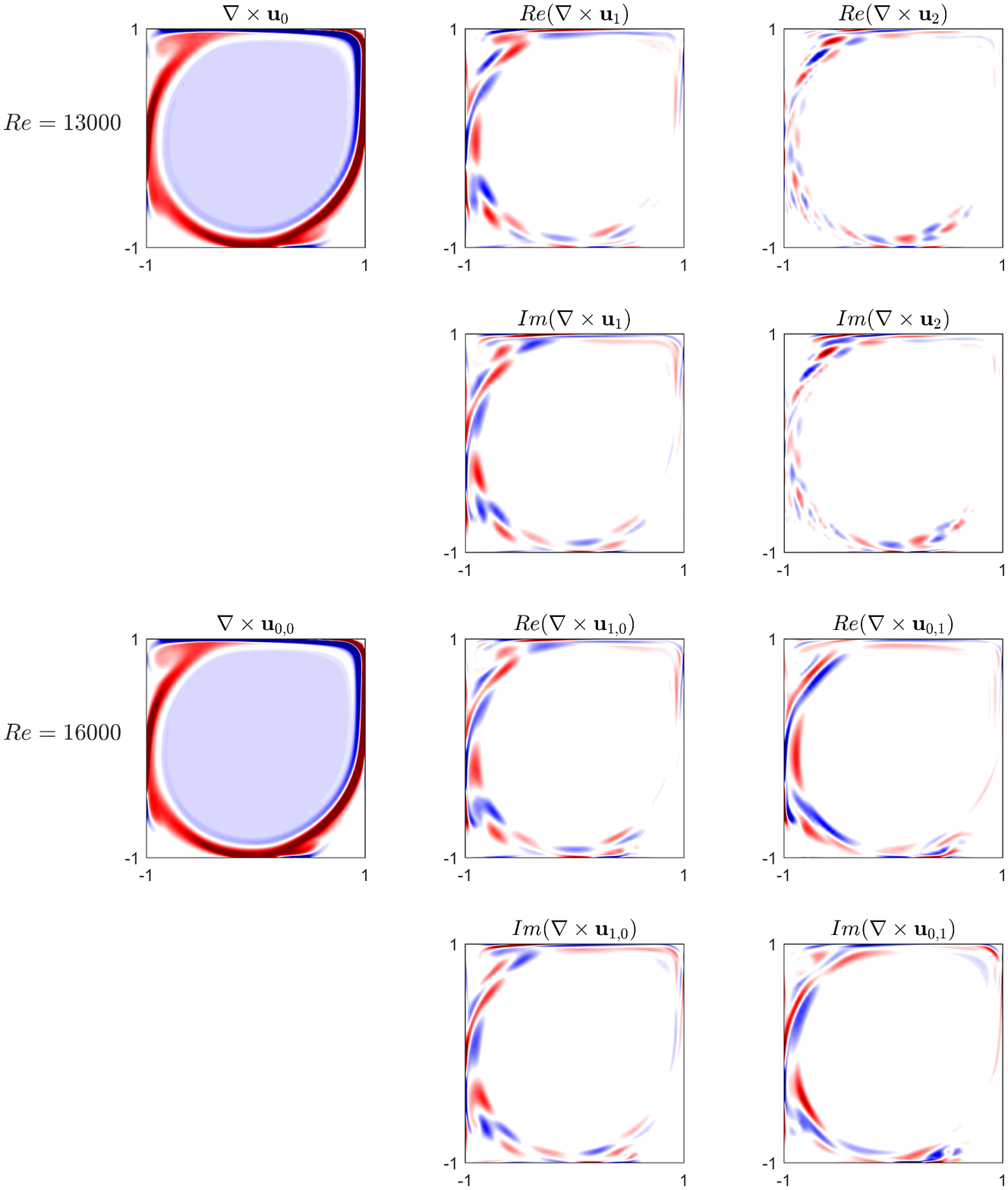}}
	\caption{The Koopman modes of vorticity in cavity flow (see the caption in next page).}
	\label{}
\end{figure*}

\begin{figure*}
\ContinuedFloat
	\centerline{\includegraphics[width=1 \textwidth]{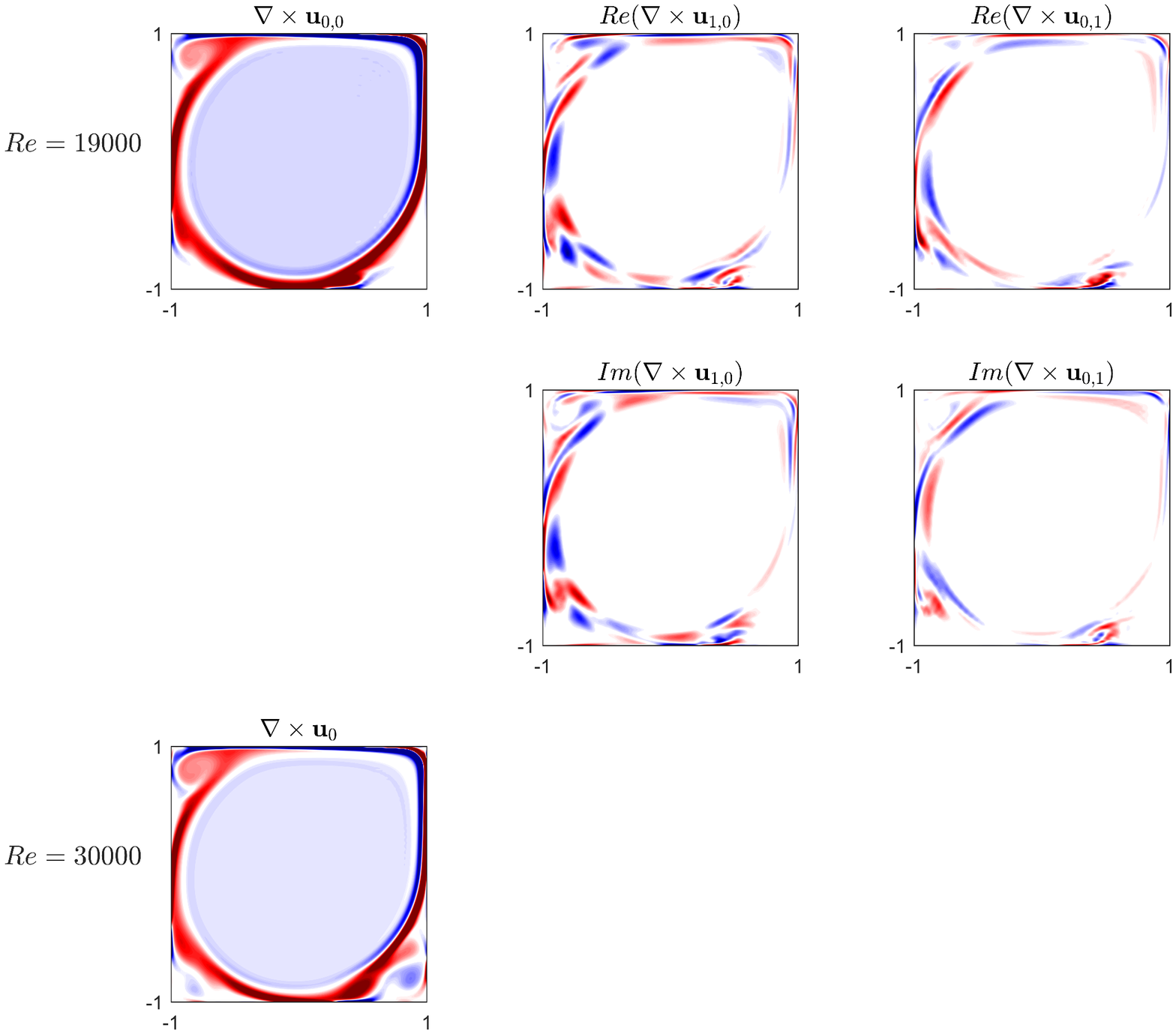}}
	\caption{The Koopman modes of the cavity flow associated with basic Koopman frequencies: The color field shows the real part of vorticity, with clockwise rotation shown in red, and counterclockwise in blue. The general structure of Koopman modes associated with same frequency trace remains unchanged as the Reynolds number is varied.}
	\label{fig:Modes}
\end{figure*}

The major share of kinetic energy in all the unsteady regimes is contained within the mean flow.
This mode is essentially composed of the central vortex in the flow and the corner eddies in the left corners.
For fully chaotic flows, the mean flow is the only Koopman mode, and its structure is similar to the mean flow of periodic and quasi-periodic flow except for the intensification of the downstream eddy in the bottom right corner.

The oscillatory Koopman modes, on the other hand, describe the flow oscillations around the edge of the central vortex in the mean flow.
To be more precise, the evolution of the mode $\mathbf{u}_k$ can be written as
\begin{eqnarray}\label{eq:modeevolution}
  \mathbf{u}(t) &=& \mathbf{u}_k e^{i\omega_kt}+\overline{\mathbf{u}}_ke^{i\omega_kt},\nonumber\\
                &=& 2Re(\mathbf{u}_k) \cos \omega_k t- 2Im(\mathbf{u}_k) \sin\omega_k t.
\end{eqnarray}
 A careful examination of the figure shows that $Re(\mathbf{u}_k)$ and $Im(\mathbf{u}_k)$  are similar for each $k$ but appear to be shifted in the direction along the shear layer of the mean flow. This observation is related to the fact that the unsteady motion in periodic and quasi-periodic regimes corresponds to wave(s) that travel along the downstream edge of the central vortex. This traveling wave structure is observed in the previous studies of cavity flow (see e.g. the sequence of flow snapshots in \cite{poliashenko1995direct,auteri2002numerical}) but never characterized.

The Koopman modes provide a straightforward framework to characterize the traveling waves from the data.
 Let's consider a simple example first: let $[0,1]$ be a periodic domain, over which, the general form of traveling wave is
\begin{equation}\label{eq:waveexample}
  f(\omega t-2\pi kx),~x\in[0,1]
\end{equation}
with $f$ being $2\pi$-periodic. Using the Fourier series expansion, we have
\begin{eqnarray}\label{eq:waveKMD}
  f(\omega t-2\pi kx)&=&\sum_{j=-\infty}^{\infty} e^{ij(\omega t- 2\pi k x)},\nonumber\\
                     &=&\sum_{j=-\infty}^{\infty} e^{ij\omega t}e^{- 2\pi jk  x}.
\end{eqnarray}
Clearly, the last expression is the KMD of $f$ with the Koopman modes given as $f_j(x)=\exp{-2\pi j k x}$. Having this example in mind, we can compute the wave numbers (and phase velocity) of traveling waves in the cavity flow through the following steps:
First, we compute the phase of each Koopman mode given by
\begin{equation}\label{eq:phase}
  \theta_k:=\angle\mathbf{u}_k=\tan^{-1}\bigg(\frac{Re(\mathbf{u}_k)}{Im(\mathbf{u}_k)}\bigg).
\end{equation}
Then, we sample the values of $\theta_k$ along the direction of travel, denoted by $\hat{x}$, and compute the average local slope of $\theta_k(\hat{x})$ to get the wave number of the mode. This process is summarized in FIG. \ref{fig:wave} (a) for the Koopman mode $\mathbf{u}_1$ of the periodic flow at $Re=13000$.
The results of this computation for different Koopman modes (shown in FIG. \ref{fig:wave}(b)) indicates that the Koopman modes associated with higher frequencies have proportionally higher wave numbers. This is expected from the KMD expansion of traveling wave \eqref{eq:waveKMD}.
Moreover, it suggests that the wave numbers rarely change with the Reynolds number, however the phase velocity (slope of the lines in FIG. \ref{fig:wave}(b)) slightly decreases due to the decrease in the Koopman frequencies.
We note that the above methodology based on Koopman modes is already used, in a concealed form, to extract the wave numbers from experimental data on nonlinear waves in thermally-driven flows  \cite{mukolobwiez1998supercritical,garnier2003nonlinear}, and internal waves in stratified flows  \cite{mercier2008reflection}.

\begin{figure*}
	\centerline{\includegraphics[width=1 \textwidth]{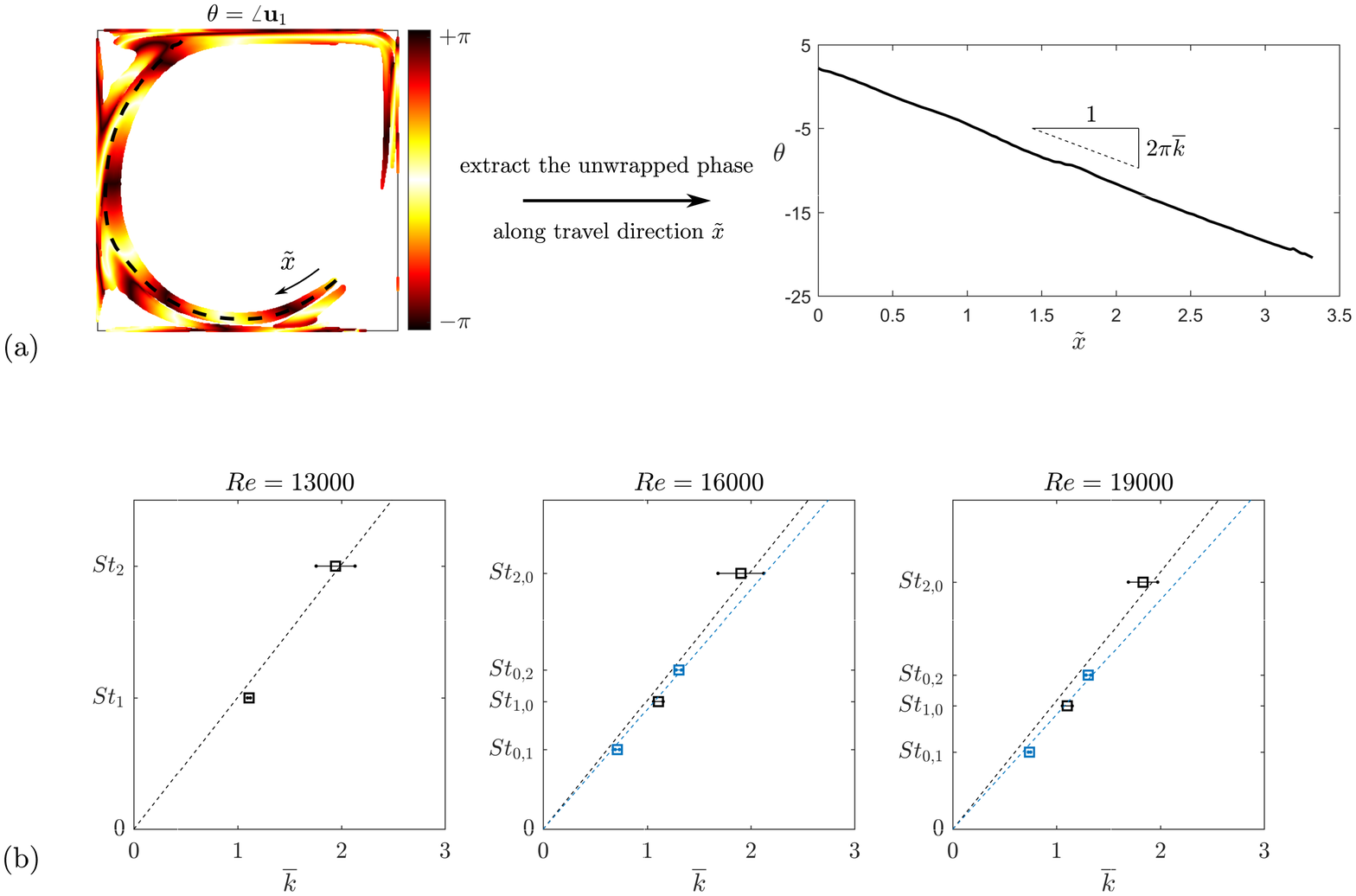}}
	\caption{(a) The process of extracting the (spatial) wave number from each Koopman mode and (b) relationship between the Koopman frequency and wave number of Koopman modes.}
	\label{fig:wave}
\end{figure*}

An interesting observation is that the oscillatory Koopman modes show a remarkable structural robustness in the range $Re=11000-19000$, despite the flow undergoing bifurcations from periodic to quasi-periodic and then skew-periodic. An examples of this robustness are the modes in panel 2, 7 and 12 in FIG. \ref{fig:Modes} (counting from top left to right) which correspond to the frequency trace of $St_{1,0}$ in periodic, quasi-periodic and mixed-spectra regimes, and panel 8 and 13 associated with the frequency $St_{0,2}$ in quasi-periodic and mixed-spectra regimes. This observation suggests that Koopman modes may provide a suitable basis for reduced modeling of flows (e.g. \cite{noack2003hierarchy}) over wide range of Reynolds number.

\subsection{Spectral Projections and Proper Orthogonal Decomposition (POD)} \label{sec:ProjectedModels}
We study the the efficiency of Koopman modes in representing the flow,  by computing the error of the spectral-projection models. An $n-$dimensional spectral projection model, is an $n$-term truncation of the KMD where the modes are sorted based on their kinetic energy.
The error defined as
\begin{equation}\label{eq:error}
\tilde e(n) = \frac{1}{T}\int_{0}^T \left\| \mathbf{u}(\mathbf{x},t)-\sum_{k=1}^{n}\mathbf{u}_k(\mathbf{x})e^{i\omega_k t}\right\|dt
\end{equation}
gives the kinetic-energy norm of the difference between the spectral projection model (the sum in the above expression) and the actual flow field. Given the finite-dimensional nature of these models, they are essentially  quasi-periodic approximations of the flow.
The time-averaged kinetic energy of the error for the spectral projections of order 1-10 is shown in FIG. \ref{fig:EulerianEerror}. In the periodic and quasi-periodic flows, the bulk of the motion is readily captured by a few  Koopman modes and the low-order projections approximate the flow with great accuracy. As the flow becomes less periodic with the increase of Reynolds number, the approximation error increases as well. For fully chaotic flows, the only Koopman mode is the mean flow and therefore there are no low-dimensional spectral projections except the steady one-dimensional model which is the mean flow itself. For purpose of comparison, however, we have plotted the error of spectral projections using Fourier modes (computed via FFT) at $Re$=30000. The kinetic energy of unsteady motion in this flow is spread in the continuous spectrum and any low-dimensional approximation using oscillatory components would involve large errors.

Figure \ref{fig:EulerianEerror} also shows an instructive comparison between Koopman mode decomposition and the Proper Orthogonal Decomposition (POD). POD is a decomposition of the flow field into spatially-orthogonal modes such that the POD-truncated models have the minimum energy error among all choices of orthogonal decompositions \citep{berkooz1993proper}. Due to its optimality and advantageous numerical properties, POD has been the keystone of many studies on coherent structures and low-order modeling of complex flows, inclduing the lid-driven cavity flow \cite{cazemier1998,balajewicz2013low}.
\begin{figure*}[ht]
	\centerline{\includegraphics[width=1 \textwidth]{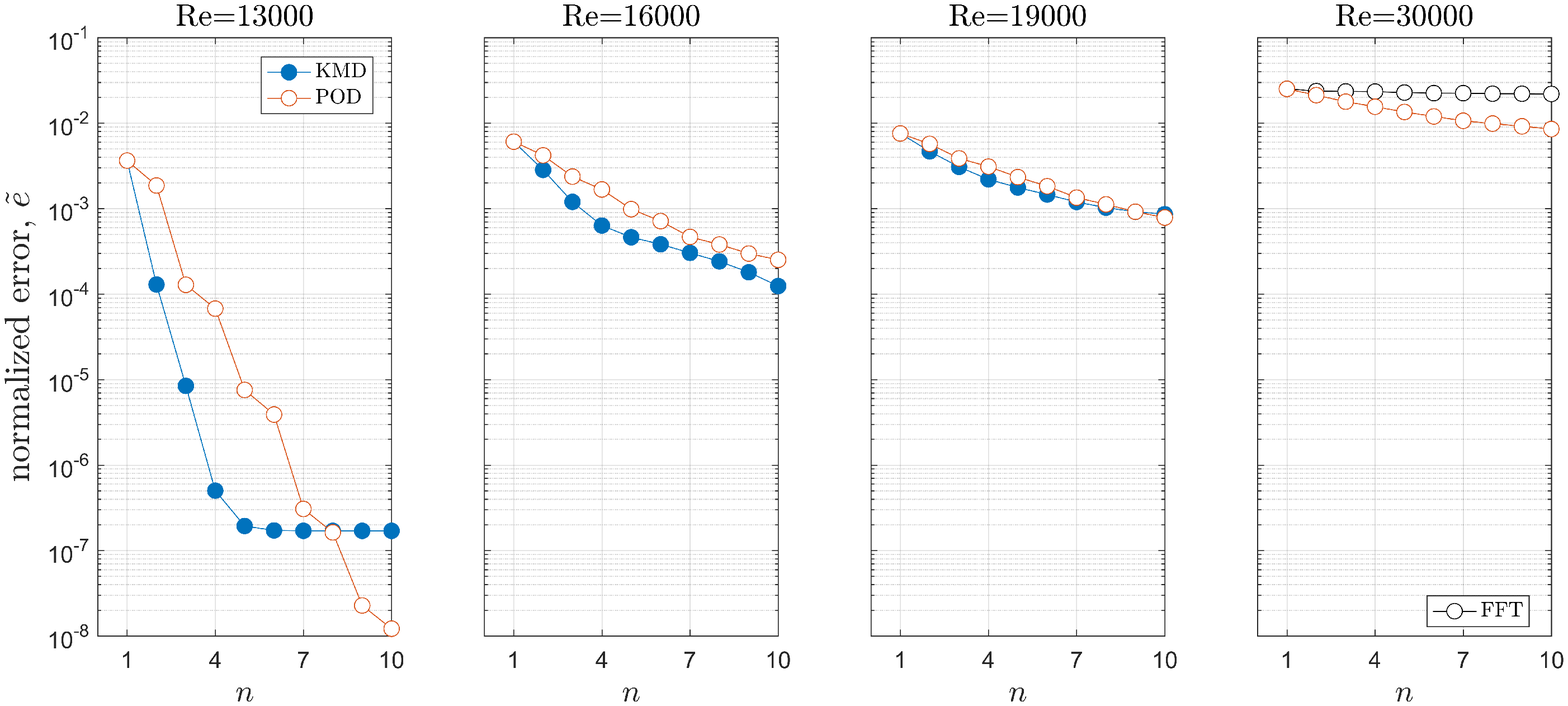}}
	\caption{The normalized kinetic energy of the error for approximation in Koopman mode decomposition (KMD) and Proper Orthogonal Decomposition (POD) as function of number of modes used in the approximation ($n$). For $Re=30000$ FFT modes are used in lieu of Koopman modes. }
	\label{fig:EulerianEerror}
\end{figure*}
In case of periodic and quasi-periodic flow, the low-dimensional spectral projection model gives a better approximation than POD-truncated models (first two panels in FIG. \ref{fig:EulerianEerror}). This observation is not contradictory to the optimality of POD-truncated models, but due to the fact that Koopman modes are complex-valued for oscillating systems which allows for better representation of dynamics evolving on limit cycles or torus as previously suggested in Ref. \onlinecite{mezic2005} (see example 4 therein). We speculate that this observations would be valid for other flows that may exhibit strong quasi-periodic behavior over some parameter range, for example, periodically-driven flows (e.g. \cite{lopez2000dynamics}), rotating Couette flow \cite{takeda1999quasi} and Rayleigh-Bernard convection \cite{swinney1978transition}, flow in converging-diverging channel \cite{guzman1994transition} and behind bluff objects \cite{karniadakis1992three}.

For flows with mixed spectra, the error of approximation with spectral projection and POD is generally comparable but as the Reynolds number increases and the flow becomes more chaotic, POD-truncated models perform better.
These observations support the alternative decomposition for model reduction proposed in \cite{mezic2005}, in which, the KMD is used to extract the periodic components of the flow, and then POD is used for representation of the remaining chaotic component.
\section{Conclusion} \label{sec:discussion}
In this work, we used the spectral properties of the Koopman operator for analysis of a post-transient flow.
Understanding of the asymptotic dynamics in the state space of the flow is achieved by inspecting the Koopman spectrum and eigenfunctions obtained by KMD of numerical data. In case of periodic or quasi-periodic attractors, the dominant Koopman frequencies possess a lattice-type structure whereas chaotic flows are associated with the continuous part of the Koopman spectrum. In between these two regimes, we observed flows that exhibit a combination of chaotic and quasi-periodic behavior (i.e. skew-periodic attractor) which is associated with mixed Koopman spectrum.
The Koopman eigenfunctions, on the other hand, determine the directions of linear evolution on quasi-periodic and skew-periodic attractors.

We employed a new approach for computation of Koopman modes and frequencies which comprises of a high-resolution algorithm to detect the eigenvalues, and an averaging estimation of the continuous spectrum. This approach, in contrast to DMD-type algorithms, is capable of computing the continuous spectrum of the Koopman operator and detecting genuine Koopman frequencies for flows with mixed spectra.

 Study of unsteady lid-driven cavity flow suggests that in case of periodic and quasi-periodic flow, a handful of Koopman modes are sufficient to represent the spatio-temporal features of the flow in a low-dimensional model. In fact, the Koopman modes offer a more efficient representation of such flows than POD modes. This is due to the fact that complex-valued Koopman modes are more suitable for describing the evolution on limit cycles and tori.

\section*{Acknowledgement and supplemental material}
This research was partially supported by the ONR grant N00014-14-1-0633. H.A. is grateful to Ati Sharma and Babak Mamandipoor for valuable discussions on spectral estimation and POD analysis. The authors also thank Peter Schmid for suggesting the computation of wave numbers using Koopman modes.

The time series data and MATLAB codes for KMD can be found at \url{https://mgroup.me.ucsb.edu/resources}.

\appendix
\section{Relationship between Koopman modes of different observables}\label{app:KMD}

We present two propositions regarding the Koopman mode decomposition of observables which are related through a linear operator or spatial differentiation.

\begin{prop}
		Let $g,h:\Omega \times M \rar \mathbb{R}$ be two fields of observables defined on the flow domain $\Omega$ and the state space of the flow $M$.
Assume $g(.,z),~h(.,z)\in\mathcal{F}(\Omega)$ for every state $z\in M$ where $\mathcal{F}(\Omega)$ denotes the space of fields on $\Omega$. Moreover, the observables are related through a linear operator $\mathcal{D}:\mathcal{F}(\Omega)\rar \mathcal{F}(\Omega)$,  such that $\mathcal Dg=h$ and $\mathcal{D}$ is bounded in $\mathcal{F}$.
Let $g_i$ and $h_i$ denote the Koopman modes of observable $g$ and $h$, respectively, associated with the Koopman eigenvalue $\lambda_j$. Then those modes are related to each other via  $\mathcal Dg_i=h_i$. In words, the Koopman mode decomposition commutes with the linear bounded operator $\mathcal{D}$.
	\label{thm:kmd3}	
\end{prop}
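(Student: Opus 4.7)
The plan is to take the Koopman mode decomposition of $g$ as given (by Eq.~\eqref{eq:KMD}), apply the linear operator $\mathcal{D}$ term-by-term, and then invoke uniqueness of the Koopman mode expansion to identify coefficients. The key point is that the Koopman eigenfunctions $\phi_j$ are scalar-valued functions of the state variable $z \in M$ only, while $\mathcal{D}$ acts on the spatial argument in $\mathcal{F}(\Omega)$; so the two operations act on disjoint arguments of the field $g(\cdot,z)$ and should commute.

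First, I would write the pointwise KMD of $g$ as a field-valued expansion, i.e.\ for each state $z$,
\begin{equation}
g(\cdot,z) = \sum_{j=1}^{\infty} g_j(\cdot)\, \phi_j(z),
\end{equation}
where $g_j \in \mathcal{F}(\Omega)$ is the Koopman mode of $g$ associated with $\lambda_j$. Next, I would apply $\mathcal{D}$ to both sides. Because $\mathcal{D}$ is linear and bounded on $\mathcal{F}(\Omega)$, it is continuous and can be exchanged with the (norm-convergent) series, and since $\phi_j(z)$ is a scalar with respect to the spatial argument on which $\mathcal{D}$ acts, we obtain
\begin{equation}
(\mathcal{D} g)(\cdot,z) = \sum_{j=1}^{\infty} \bigl(\mathcal{D} g_j\bigr)(\cdot)\, \phi_j(z).
\end{equation}

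Using the hypothesis $\mathcal{D} g = h$ on the left-hand side, this becomes an expansion of $h(\cdot,z)$ in the Koopman eigenfunctions $\{\phi_j\}$ with spatial coefficients $\mathcal{D} g_j$. On the other hand, the KMD of $h$ itself reads $h(\cdot,z) = \sum_j h_j(\cdot)\,\phi_j(z)$. Subtracting the two expansions and using the orthogonality of the Koopman eigenfunctions in $\mathcal{H} = L^2(A,\mu)$ (noted in \S\ref{subsec:bifurcationKMD}), I would take the inner product with $\phi_k$ in $\mathcal{H}$ pointwise in space to conclude $h_k = \mathcal{D} g_k$ for every $k$.

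The main obstacle, and the only subtle point, is the interchange of $\mathcal{D}$ with the infinite sum: this is exactly where the boundedness assumption on $\mathcal{D}$ is used. A careful version should specify the topology on $\mathcal{F}(\Omega)$ in which the KMD converges (e.g.\ $L^2$ on $\Omega$, matching the kinetic-energy norm used elsewhere in the paper), and then boundedness of $\mathcal{D}$ in that topology gives continuity and hence commutation with the series. A secondary technical issue is that the KMD \eqref{eq:KMD} is a functional equality $\mu$-almost everywhere on $A$, so the identification $h_k = \mathcal{D} g_k$ likewise holds in the $\mathcal{F}(\Omega)$-sense for $\mu$-almost every $z$, which is sufficient for the Koopman-mode statement. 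Once these measure-theoretic and topological details are handled, the proof is essentially a one-line commutation argument.
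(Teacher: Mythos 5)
Your argument is correct in spirit but takes a genuinely different route from the paper, and it carries one real hypothesis that the paper's proof avoids. The paper does not expand $g$ in the eigenfunction basis at all: it starts from the harmonic-average formula for the modes, $h_j=\lim_{T\to\infty}\frac{1}{T}\int_0^T h(t)e^{-i\omega_j t}\,dt$, substitutes $h=\mathcal{D}g$, and pulls $\mathcal{D}$ first through the finite-time integral (linearity) and then through the limit $T\to\infty$ (boundedness/continuity), arriving at $h_j=\mathcal{D}g_j$ directly; it then notes that the dissipative case follows from the Generalized Laplace Analysis formula by the same interchange plus induction. Your route instead writes $g(\cdot,z)=\sum_j g_j(\cdot)\phi_j(z)$, commutes $\mathcal{D}$ with the series, and identifies coefficients by orthogonality. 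The commutation and uniqueness steps are fine, and you correctly isolate where boundedness enters. The weakness is the starting point: the expansion you take as given holds only when the observable lies in the closed span of the Koopman eigenfunctions, i.e.\ when the spectrum is purely discrete. For the mixed-spectrum flows that this paper explicitly treats (Eq.~\eqref{eq:KMDgeneral}), $g$ and $h$ each have a continuous-spectrum component, the two series you subtract do not represent $g$ and $h$ in full, and the coefficient identification no longer follows from equating the expansions. The fix is essentially to retreat to the paper's argument: define each mode by the single projection $g_j=\langle g,\phi_j\rangle_{\mathcal H}$ (equivalently, by ergodicity, the harmonic average) and commute $\mathcal{D}$ with that one integral and its limit, which requires no completeness of the eigenfunctions. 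What your approach buys, when it applies, is a cleaner conceptual picture (the operator acts on the spatial slot, the eigenfunctions on the state slot, so they trivially commute); what the paper's approach buys is validity for mixed and continuous spectra and an extension to the dissipative case.
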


\begin{proof}
	Let us assume for now that the dynamical system is measure-preserving which implies $\lambda_j=i\omega_j,~\omega_j\in\mathbb{R}$. The Koopman modes are computed via the harmonic averaging and it follows that
	\begin{eqnarray}
	h_j&=&\lim_{T\rightarrow \infty}\frac{1}{T}\int_0^T h(t) e^{-i\omega_jt}dt \\
	&=&\lim_{T\rightarrow \infty}\frac{1}{T}\int_0^T \mathcal Dg(t) e^{-i\omega_jt}dt \nonumber \\
	&=&\lim_{T\rightarrow \infty}\frac{1}{T}\mathcal D\int_0^T g(t) e^{-i\omega_jt}dt \nonumber\\
	&=&\mathcal D\lim_{T\rightarrow \infty}\frac{1}{T}\int_0^T g(t) e^{-i\omega_jt}dt \nonumber\\
	&=& \mathcal D~ g_j \nonumber
	\end{eqnarray}
	We have used the linearity and continuity (i.e. boundedness) of $\mathcal D$ in the 3rd and 4th equalities, respectively. If the system is dissipative, the Koopman modes are given by Generalized Laplace Analysis formula \cite{mezic2013analysis}. The above argument could be used, along with induction, to prove the statement for that case.
\end{proof}
For the purposes of this paper, we are mostly interested in linear operators that involve spatial derivatives, such as gradient or curl. The derivative operator, however, is not bounded and therefore needs a special treatment which is given by the following proposition.

\begin{prop}	\label{thm:kmd4}
	Let $g,h:\Omega \times M \rar \mathbb{R}$ be two fields of observables defined on the flow domain $\Omega$ and the state space of the flow denoted by $M$. Assume the observables are bounded over their domain and related to each other through
	\begin{equation}
	\mathcal{D}g=h,
	\end{equation}
	where $D$ denotes the partial derivative with respect to a spatial coordinate in the flow domain $\Omega$.
	Let $g_k$ and $h_k$ denote the Koopman modes of observables associated with the Koopman frequencies $\omega_k,~k=1,2,\ldots$.
	Then
	\begin{equation}
	\mathcal{D}g_k=h_k.
	\end{equation}
	for every $x\in \Omega$. In words, the Koopman mode decomposition commutes with the spatial differentiation in the flow domain.
\end{prop}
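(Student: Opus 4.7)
The plan is to reduce the proposition to Proposition \ref{thm:kmd3} in spirit but work around the fact that the spatial derivative $\mathcal{D}$ is unbounded. The strategy is to exchange differentiation for integration using the fundamental theorem of calculus along the relevant coordinate direction; then dominated convergence (applied to the time variable, using boundedness of $h$) can be invoked cleanly.

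First I would fix a point $x\in\Omega$ and a trajectory $\mathbf{u}(t)$ in $M$ on which harmonic averaging is valid (i.e., on the attractor where the ergodic pointwise limit holds, by the hypothesis inherited from the derivation of KMD). For each $T>0$ define the finite-time harmonic averages
\begin{equation*}
G_T(x):=\frac{1}{T}\int_0^T g(x,\mathbf{u}(t))\,e^{-i\omega_k t}\,dt,\qquad
H_T(x):=\frac{1}{T}\int_0^T h(x,\mathbf{u}(t))\,e^{-i\omega_k t}\,dt.
\end{equation*}
Because the spatial variable $x$ and the time variable $t$ are independent, and because $g(\cdot,\mathbf{u}(t))$ and $\mathcal{D}g(\cdot,\mathbf{u}(t))=h(\cdot,\mathbf{u}(t))$ are jointly bounded on $\Omega\times[0,T]$, the Leibniz rule (differentiation under the integral sign) is immediate and yields $\mathcal{D}G_T=H_T$ for every finite $T$.

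Next I would convert the derivative identity into an integrated form so that the later passage to the limit does not require interchanging $\mathcal{D}$ with a limit. Let $x_0\in\Omega$ be a reference point and parameterize the straight line from $x_0$ to $x$ along the coordinate on which $\mathcal{D}$ differentiates. By the fundamental theorem of calculus,
\begin{equation*}
G_T(x)-G_T(x_0)=\int_{x_0}^{x} H_T(\xi)\,d\xi.
\end{equation*}
Now let $T\to\infty$. The pointwise ergodic theorem gives $G_T(x)\to g_k(x)$ and $H_T(\xi)\to h_k(\xi)$ for every $x,\xi$ in $\Omega$ (off a $\mu$-null set of trajectories, which we avoid by hypothesis). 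Since $h$ is uniformly bounded on $\Omega\times M$, so is $H_T$, uniformly in $T$ and $\xi$; the dominated convergence theorem on the compact segment $[x_0,x]$ then gives
\begin{equation*}
g_k(x)-g_k(x_0)=\int_{x_0}^{x} h_k(\xi)\,d\xi.
\end{equation*}
Differentiating both sides in $x$ (the integrand $h_k$ is continuous, as a uniform limit of continuous finite-time averages built from smooth $h$) produces $\mathcal{D}g_k(x)=h_k(x)$, which is the desired identity. The same argument handles higher-order or mixed partial derivatives by iterated application.

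The main obstacle is exactly the unboundedness of $\mathcal{D}$: one cannot directly push $\mathcal{D}$ through the long-time limit defining $g_k$, as was done in Proposition \ref{thm:kmd3}. The FTC detour is the workaround — it trades the unbounded operation of differentiation for the bounded operation of integration along a compact segment, where dominated convergence (powered by the \emph{boundedness hypothesis on the observables}) applies cleanly. A secondary technical point, which I would note but not belabor, is ensuring the finite-time averages $G_T,H_T$ are continuous in $x$, so that the pointwise limits $g_k,h_k$ inherit enough regularity for the final differentiation step; this follows from smoothness of $g,h$ in $x$ together with uniform-in-$T$ bounds on $G_T,H_T$ and their first derivative.
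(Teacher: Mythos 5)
Your proposal is correct in substance but takes a genuinely different route from the paper's proof. Both arguments begin identically: you define the finite-time harmonic averages $G_T=g_k^T$, $H_T=h_k^T$ and use the Leibniz rule to get $\mathcal{D}G_T=H_T$ for every finite $T$. The paper then finishes in one stroke: it invokes the Wiener--Wintner theorem to guarantee that $h_k(x)=\lim_T H_T(x)$ exists whenever $g_k(x)$ does, and identifies $\lim_T \mathcal{D}g_k^T(x)$ with $\mathcal{D}g_k(x)$ --- i.e., the interchange of the long-time limit with the spatial derivative is essentially asserted. You instead integrate the identity $\mathcal{D}G_T=H_T$ along the coordinate direction via the fundamental theorem of calculus, pass to the limit under the spatial integral by dominated convergence (this is where the boundedness hypothesis on the observables actually does work), and then differentiate the limiting identity $g_k(x)-g_k(x_0)=\int_{x_0}^{x}h_k(\xi)\,d\xi$. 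This is a more self-contained treatment of the unboundedness obstruction and makes explicit the step the paper glosses over. The price is the final differentiation: you need $h_k$ to be continuous to recover the conclusion at \emph{every} $x$, and your justification (uniform convergence of $H_T$) does not follow from the pointwise ergodic theorem alone --- it would require equicontinuity of the family $\{H_T\}$, hence uniform-in-$T$ bounds on $\mathcal{D}H_T$, which in turn needs more regularity of $h$ than the stated boundedness. Without that, Lebesgue differentiation gives $\mathcal{D}g_k=h_k$ only for almost every $x$. You flag this as a secondary point, which is fair; it is the one place where your argument delivers slightly less than the proposition claims unless the extra smoothness (which holds for the stream-function observables in this paper) is assumed.
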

\begin{proof}
Define the finite-time harmonic average of $g$ at the frequency $\omega_k$ by
\begin{equation}
g_k^T(x)=\frac{1}{T}\int_0^T g(x,t)e^{-i\omega_kt}dt. \label{eq:finInt}
\end{equation}
The Koopman mode of $g$ associated with $\omega_k$ is then given by \cite{mezic2005},
\begin{equation}
g_k=\lim_{T\rar\infty} g_k^T.
\end{equation}
The Leibniz rule implies that the spatial derivative commutes with the finite-time averaging in \eqref{eq:finInt}, i.e.,
\begin{equation}
h^T_k=\mathcal{D} g_k^T.
\end{equation}
for any finite $T$. Fix $x$.
If both $h(x)$ and $g(x)$ are integrable functions, then the existence of $g_k(x)$ implies the existence of $h_k(x)$ \citep{wiener1941harmonic}, and since
the scalar functions $h^T_k(x)$ and $\mathcal{D} g_k^T(x)$ are equal up to any finite $T$, their limits as $T\rar \infty$ must be equal, i.e., $h_k(x)=\mathcal{D} g_k(x)$.
\end{proof}

\section{Discrete spectra of Koopman operator}\label{app:freq}
Table \ref{tab:freq} shows the Koopman frequencies of cavity flow, computed by algorithm \ref{alg:Laskar} and ranked in order of decreasing kinetic energy.
The basic frequency vector is $St = 0.159826$ for the flow at $Re$=13000, $St=[0.155375,0.096911]$ for $Re$=16000 and  $St=[0.150046,0.093778]$ for $Re=19000$. The rest of the frequencies can be described (to the sixth digit of accuracy) as a linear combination of the basic frequencies. The coefficients of the combination are given by $\mathbf{k}$ in the table.

\begin{table}
 \begin{ruledtabular}
  \begin{tabular}{cc @{\hskip 0.35in} cc @{\hskip 0.35in} cc}
  \multicolumn{2}{l}{\qquad$Re$=13000 } & \multicolumn{2}{c}{$Re$=16000 } & \multicolumn{2}{ c}{$Re$=19000  }\\
    \multicolumn{2}{ l}{ \qquad(periodic)} & \multicolumn{2}{c}{ (quasi-periodic) } & \multicolumn{2}{ c}{ (mixed spectra) }\\[2pt]
    \hline
      $St$                  &   $k$     &$St$ & $\mathbf{k}$ &  $St$   &   $\mathbf{k}$ \\
    \multicolumn{1}{c}{0.000000} &   0    & 0.000000 & (0,0)      & 0.000000 & (0,0)\\
    \multicolumn{1}{c}{0.159826} &    1   & 0.155375 &  (1,0)     & 0.243822 & (1,1)\\
    \multicolumn{1}{c}{0.319653} &   2   & 0.252287 &   (1,1)    & 0.150046  & (1,0)\\
    \multicolumn{1}{c}{0.479479} &   3    & 0.096911 &   (0,1)    & 0.206310 &  (2,-1)\\
    \multicolumn{1}{c}{0.639303} &   4   & 0.193823 &    (0,2)   & 0.093778 &    (0,1)\\
    \multicolumn{1}{c}{0.799130} &   5   & 0.310751 &    (2,0)   & 0.300089 & (2,0)\\
    \multicolumn{1}{c}{0.958954} &   6   & 0.213839 &     (2,-1)  & 0.187556 & (0,2)\\
                                &       & 0.349197 &  (1,2)     & 0.056266 & (1,-1)\\
                                &       & 0.407661 &   (2,1)    & 0.450133 & (8,-8)\\
                                &       & 0.058463 &   (1,-1)    & 0.393868 & (7,-7)\\
                                &       & 0.116927 &   (2,-2)    & 0.337602 & (6,-6)\\
                                &       & 0.135359 &    (-1,3)   & 0.225068 &  (4,-4)\\
                                &       & 0.446109 &     (1,3)  & 0.487643 & (2,2)\\
                                &       & 0.290733 &   (0,3)    & 0.018756 & (2,-3)\\
                                &       & 0.038447 &    (-1,2)   & 0.431379 & (6,-5)\\
                                &       & 0.504573 &     (2,2)  & 0.543910 & (8,-7)\\
                                &       & 0.020017 &     (2,-3)  & 0.375111 & (0,4)\\
                                 &       & 0.232270 &    (-1,4)   & 0.468886 & (0,5) \\
                                &       & 0.466125 &     (3,0)  & 0.693955 & (9,-7)\\
                                &       & 0.272303 &     (3,-2)  & 0.581423 & (2,3)\\
    \end{tabular}
    \caption{Koopman frequencies with highest kinetic energy in the lid-driven cavity flow.}{\label{tab:freq}}
  \end{ruledtabular}
\end{table}


%

\end{document}